\documentclass[11pt,letterpaper]{article}

\usepackage[utf8]{inputenc}
\usepackage{amsmath, amsthm, amssymb}
\usepackage[hmargin=3.5cm,vmargin=3.3cm]{geometry}
\usepackage[usenames]{xcolor}
\definecolor{Blue}{RGB}{0, 38, 75}
\definecolor{Red}{RGB}{181,35,17}
\usepackage{hyperref}
\hypersetup{
    colorlinks,
    citecolor=Red,
    filecolor=Blue,
    linkcolor=Blue,
    urlcolor=Blue
}
\usepackage{enumerate}
\usepackage[runin]{abstract}
\usepackage{sectsty}
\usepackage{mathtools}

\sectionfont{\large}
\subsectionfont{\normalsize}

\newtheorem{lemma}{Lemma}
\newtheorem{theorem}{Theorem}
\newtheorem{conjecture}{Conjecture}
\newtheorem{corollary}{Corollary}
\theoremstyle{remark}
\newtheorem*{remark}{Remark}

\DeclareMathOperator{\down}{down}
\DeclareMathOperator{\dist}{dist}

\DeclareMathOperator*{\E}{\mathbb{E}}
\DeclareMathOperator{\EE}{EE}
\DeclareMathOperator{\DISJ}{DISJ}
\DeclareMathOperator{\Match}{Match}
\DeclareMathOperator{\supp}{supp}

\DeclareMathOperator*{\Ent}{H}
\DeclareMathOperator*{\BEnt}{H_2}

\DeclareMathOperator{\D}{\mathbf{D}}
\DeclareMathOperator{\BD}{\mathbf{D}_2}

\makeatletter
\newcommand*{\defeq}{\mathrel{\rlap{%
                     \raisebox{0.3ex}{$\m@th\cdot$}}%
                     \raisebox{-0.3ex}{$\m@th\cdot$}}%
                     =}
\makeatother
\newcommand{\dmid}{\,\|\,}
\newcommand{\emid}{\,|\,}

\newcommand{\propref}[1] {%
\hyperref[#1]{Property~(P\ref*{#1})}}

\begin{document}

\title{\Large\bf On the communication complexity of sparse set disjointness and exists-equal problems}
\author{
\normalsize Mert Sa\u{g}lam\\%
\small University of Washington\\%
\small saglam@google.com \and
\normalsize G\'abor Tardos\\
\small Simon Fraser University and\\
\small Alfr\'ed R\'enyi Institute of Mathematics\\
\small tardos@renyi.hu }
\date{}

\maketitle
\setlength{\abstitleskip}{-\absparindent}
\thispagestyle{empty}
\vspace{-7.5mm}

\begin{abstract}
In this paper we study the two player randomized communication complexity 
of  the sparse set disjointness and the exists-equal problems and give matching 
lower and upper bounds (up to constant factors) for any number of rounds
for both of these problems. In the sparse set disjointness problem, each player 
receives a $k$-subset of $[m]$ and the goal is to determine whether
the sets intersect. For this problem, we give a protocol that communicates 
a total of $O(k\log^{(r)}k)$ bits over $r$ rounds and errs with very small
probability.
Here we can take $r=\log^{*}k$ to obtain a $O(k)$ total communication 
$\log^{*}k$-round protocol with exponentially small error probability, 
improving on the $O(k)$-bits $O(\log k)$-round constant error
probability protocol of H\aa stad and Wigderson from 1997.

In the exist-equal problem, the players receive vectors $x,y\in [t]^n$ and the
goal is to 
determine whether there exists a coordinate $i$ such that $x_i=y_i$. Namely,
the exists-equal problem is the OR of $n$ equality problems. Observe that 
exists-equal is an instance of sparse set disjointness with $k=n$, hence the protocol 
above applies here as well, giving an $O(n\log^{(r)}n)$ upper bound.
Our main technical contribution in this paper is a matching 
lower bound: we show that when $t=\Omega(n)$, any $r$-round randomized protocol for 
the exists-equal problem with error probability at most $1/3$
should have a message of size $\Omega(n\log^{(r)}n)$. Our lower bound holds 
even for super-constant $r\le \log^*n$,
showing that any $O(n)$ bits exists-equal protocol should 
have $\log^*n - O(1)$ rounds. Note that the protocol we give errs only with
less than polynomially 
small probability and provides guarantees on the total communication for the harder set 
disjointness problem, whereas our lower bound 
holds even for constant error probability protocols and for the easier
exists-equal problem with guarantees on 
the max-communication. Hence our upper and lower bounds match in a strong sense.

Our lower bound on the constant round protocols for exists-equal show that
solving the OR of $n$ instances of the equality problems requires strictly
more than $n$ times the cost of a single instance. To our knowledge this is
the first example of such a {\em super-linear} increase in complexity.
\end{abstract}

\newpage
\section{Introduction}
In a two player communication problem the players, named Alice and Bob, receive
separate inputs, $x$ and $y$, and they communicate in order to compute the
value $f(x,y)$ of a function $f$. In an $r$-round protocol, the players can
take at most $r$ turns alternately sending each other a message and the last player
to receive a message declares the output of the protocol. A protocol can be
{\em deterministic} or {\em randomized}, in the latter case the players can
base their actions on a common random source and we measure the {\em error
probability}: the maximum over inputs $(x,y)$, of the
probability that the output of the protocol differs from $f(x,y)$.

\subsection{Sparse set disjointness}
Set disjointness is perhaps the most studied problem in 
communication complexity. In the most standard version Alice and Bob 
receive a subset of $[m]:=\{1,\ldots,m\}$ each, with the goal of deciding whether their 
sets intersect or not. The primary question is whether the players can improve on 
the trivial deterministic protocol, where the first player sends the entire input to the other player, 
thereby communicating $m$ bits. The first lower bound on the randomized 
complexity of  this problem
was given in \cite{BabaiFS86} by Babai et al., who showed that any 
$\epsilon$-error protocol
for disjointness must communicate $\Omega(\sqrt{m})$ bits. The tight bound
of $\Omega(m)$-bits was first given by 
Kalyanasundaram and Schnitger \cite{KalyanasundaramS92} and was later
simplified by Razborov \cite{Razborov92} and Bar-Yossef et al.\
\cite{Bar-YossefJKS04}. 

In the sparse set disjointness problem $\DISJ_k^m$, the sets given to the
players are guaranteed to have 
at most $k$ elements. The deterministic communication complexity of this
problem is well understood.
The trivial protocol, where Alice sends her entire input to Bob solves the 
problem in one round using $O(k\log(2n/k))$ bits.
On the other hand, an $\Omega(k\log(2n/k))$ bit total 
communication lower bound can be shown even 
for protocols with an arbitrary number of rounds, say using the rank method; 
see \cite{KushilevitzN97}, page 175.

The randomized complexity of the problem is far more subtle. 
%
%
The results cited above immediately imply a 
$\Omega(k)$ lower bound for this version of the problem.
The folklore $1$-round protocol solves the problem using $O(k\log k)$ bits,
wherein Alice sends $O(\log k)$-bit hashes for each element of her set. 
H\aa stad and Widgerson \cite{HastadW07} gave a protocol that matches the 
$\Omega(k)$ lower bound mentioned above.
Their $O(k)$-bit randomized protocol runs in $O(\log k)$-rounds and
errs  with a small constant probability. In \autoref{sec:upperbound}, we 
improve this protocol to run in $\log^*k$ rounds, still with 
$O(k)$ total communication, but with exponentially small error in $k$. We also
present a $r$-round protocol for any $r<\log^*k$ with total communication
$O(k\log^{(r)}k)$ and error probability well below $1/k$; see
\autoref{thm:ub}. (Here $\log^{(r)}$
denotes the iterated logarithm function, see \autoref{notation}.) As the
exists-equal problem with parameters $t$ and $n$ (see below) is a special case
of $\DISJ_n^{tn}$, our lower bounds for the exists-equal problem (see below)
show that complexity of this algorithm is optimal for any number $r\le\log^*k$
of rounds, even if we allow much the larger error probability of $1/3$.
Buhrman et al.~\cite{BuhrmanGMW12} and Woodruff \cite{Woodruff08} (as presented in \cite{Patrascu09}) show an $\Omega(k\log k)$ lower bound for $1$-round 
complexity of $\DISJ^m_k$ by a reduction from the indexing problem
(a similar reduction was also given in \cite{DasguptaKS12}). 
We note that these lower bounds do not apply to the exists-equal problem,
as the input distribution they use generates instances inherently specific to 
the disjointness problem; 
furthermore this distribution admits a $O(\log k)$ protocol in two rounds.   

\subsection{The exists-equal problem}
In the equality problem Alice and Bob receive elements $x$ and $y$ of a
universe $[t]$ and they have to decide whether $x=y$.
We define the two player communication game exists-equal with parameters 
$t$ and $n$ as follows.
Each player is given an $n$-dimensional vector from $[t]^n$, namely $x$ and $y$.
The value of the game is one if there exists a coordinate
$i\in[n]$ such that $x_i = y_i$, zero otherwise. Clearly, this problem is the
OR of $n$ independent instances of the equality problem.

The direct sum problem in communication complexity is the study of whether
$n$ instances of a problem can be solved using less than 
$n$ times the  communication required for a single instance of the problem.
This question has been studied extensively for specific communication problems
as well as some class of problems \cite{ChakrabartiSWY01, JainRS03, 
JainRS05, Ben-AroyaRW08, Gavinsky08, JainKN08, HarshaJMR10,  BarakBCR10}.
The so called direct sum approach is a very powerful tool to show lower
bounds for communication games. In this approach, one expresses the problem
at hand, say as the OR of $n$ instances of a simpler function
and the lower bound
is obtained by combining a lower bound for the simpler problem 
with a direct sum argument.
For instance, the two-player and multi-player disjointness
bounds of \cite{Bar-YossefJKS04}, the lopsided set disjointness bounds 
\cite{Patrascu11},  and the lower bounds for several communication 
problems that arise from streaming algorithms 
\cite{JayramW09, MagniezMN10} are a few examples of results that 
follow this approach.

Exists-equal with parameters $t$ and $n$ is a special
case of $\DISJ_n^{tn}$, so our protocols in \autoref{sec:upperbound} solve
exists-equal. We show that when $t=\Omega(n)$ these protocols are optimal,
namely every $r$-round randomized protocol ($r\le\log^*n$) with at most $1/3$
error error probability needs to send at least 
one message of size $\Omega(n\log^{(r)}n)$ bits. See \autoref{thm:main}. Our
result shows
that computing the OR of $n$ instances of the equality problem requires
{\em strictly more} than $n$ times the communication required to solve a
single instance
of the equality problem when the number of rounds is smaller than $\log^*
n-O(1)$.
Recall that the equality problem admits an $\epsilon$-error 
$\log(1/\epsilon)$-bit one-round protocol in the common random source model.

For $r=1$, our result implies that to compute the OR of $n$ 
instances of the equality problem with {\em constant probability}, no protocol
can do better than solving each instance of the equality problem with 
{\em high probability} so that the union bound can be applied when taking the
OR of the computed results. The single round case of our lower bound
also generalizes the $\Omega(n\log n)$ lower bound of Molinaro et al.\
\cite{MolinaroWY13} for the
one round communication problem, where the players have to find all the
answers of $n$ equality problems, outputting an $n$ bit string.

\subsection{Lower bound techniques}

We obtain our general lower bound via a round elimination argument. 
In such an argument one assumes the existence of a protocol $P$ that 
solves a communication problem, say $f$, in $r$ rounds. 
By suitably modifying the internals of $P$, one obtains another 
protocol $P'$ with $r-1$ rounds, which typically solves smaller 
instances of $f$ or has larger error than $P$. 
Iterating this process, one obtains a protocol with zero rounds. 
If the protocol we obtain solves non-trivial instances of $f$ with good 
probability, we conclude that we have arrived at a contradiction,
therefore the protocol we started with, $P$, cannot exist. 
Although round elimination arguments have been used 
for a long time, our round elimination lemma is the first to 
prove a {\em super-linear} communication lower bound in the 
number of primitive problems involved, obtaining which requires new 
and interesting ideas.

The general round elimination presented in \autoref{sec:lowerbound} is very 
involved, but the lower bound on
the one-round protocols can also be obtained in a more elementary way. As the
one round case exhibits the most dramatic super-linear increase in the
communication cost and also generalizes the lower bound in \cite{MolinaroWY13}, we
include this combinatorial argument separately in \autoref{sec:elementary},
see \autoref{thm:singleround}.

At the heart of the general round elimination lemma is a new isoperimetric 
inequality on the discrete cube $[t]^n$ endowed with the Hamming distance. We
present this result, \autoref{thm:isoperimetry}, in \autoref{sec:isoperimetry}.
To the best of our knowledge, the first isoperimetric inequality on this metric 
space was proven by Lindsey in \cite{Lindsey64}, where the subsets of $[t]^n$ of a certain size 
with the so called minimum induced-edge number were characterized. 
This result was rediscovered in \cite{KleitmanKR71} and \cite{Clements71} as well.
See \cite{AzizogluO03} for a generalization of this inequality to universes which are $n$-dimensional boxes with arbitrary side lengths.
In \cite{BollobasL91}, Bollobás et al.\ study isoperimetric inequalities on $[t]^n$ 
endowed with the $\ell_1$ distance.
For the purposes of our proof we 
need to find sets $S$ that minimize a substantially more complicated measure. This measure 
also captures how spread out $S$ is and can be described roughly as
the average over points $x\in[t]^n$ of the logarithm of the number of 
points in the intersection of $S$ and a Hamming ball around $x$.

\subsection{Related work}
In \cite{MiltersenNSW98}, a round elimination lemma was given, 
which applies to a class of problems with certain 
self-reducibility properties.
The lemma is then is used to get lower bounds for various problems including 
the greater-than and the predecessor problems.
This result was later tightened in \cite{Sen03} to get better bounds for the 
aforementioned problems.
Different round elimination arguments were also used in
\cite{KarchmerW90, HalstenbergR88, NisanW93,Miltersen94,
DurisGS87,BeameF01} for various 
communication complexity lower bounds and most 
recently in \cite{BrodyC09} and \cite{BrodyCRVW10} for obtaining 
lower bounds for the gapped 
Hamming distance problem.

Independent of and in parallel of the present form of this paper Brody et al.\
\cite{BrodyCK12} have also established an $\Omega(n\log^{(r)}n)$ lower bound
for the $r$-round communication complexity of the exists-equal problem with
parameter $n$. Their result applies for protocols with a polynomially small
error probability like $1/n$. This stronger assumption on the protocol allows
for simpler proof techniques, namely the 
information complexity based direct sum technique developed in 
several papers including \cite{ChakrabartiSWY01},
but it is not enough to create an 
example where solving the OR of $n$ communication problems requires more than
$n$ times the communication of solving a single instance. Indeed, even in the
shared random source model one needs $\log n$ bits of communication
(independent of the number of rounds) to achieve $1/n$ error in a single
equality problem.

\subsection{Notation}\label{notation}
For a positive integer $t$, we write $[t]$ for the set of positive integers not
exceeding $t$. For two $n$-dimensional vectors 
$x$, $y$, let $\Match(x,y)$ be the number of coordinates where $x$ and $y$
agree. Notice that $n-\Match(x,y)$ is the Hamming distance between $x$ and
$y$.
For a vector $x\in[t]^n$ we write $x_i$ for its $i$\/th coordinate.
We denote the distribution of a random variable $X$ by $\dist(X)$ and 
the support set of it by $\supp(X)$. 
We write $\Pr_{x\sim\nu}[\cdot]$ and
$\E_{x\sim\nu}[\cdot]$ for
the probability and expectation, respectively, when $x$ is distributed
according to a distribution $\nu$. We write $\mu$ for the uniform distribution
on $[t]^n$.  For instance, for a set $S\subseteq [t]^n$, we have $\mu(S) = |S| / t^n$.

For $x,y\in [t]^n$ we denote 
the value of the exists-equal game by $\EE_n^t(x,y)$. Recall that it is zero if
and only if  $x$ and $y$ differ in each coordinate. Whenever we drop $t$ from
the notation we assume $t=4n$. Often we will also drop $n$
and simply denote the game value by $\EE(x,y)$ if $n$ is clear from the 
context.

All logarithms in this paper are to the base 2. 
Analogously, throughout this paper we take $\exp(x)=2^x$.
We will also use the iterated versions of these functions:
\begin{align*}
\log^{(0)}x&\defeq x, & \exp^{(0)}x&\defeq x,\\
\log^{(r)}x&\defeq \log(\log^{(r-1)}x), & \exp^{(r)}x&\defeq \exp(\exp^{(r-1)}x)
\quad\text{for $r\geq 1$}.
\end{align*}
Moreover we define $\log^* x$ to be the smallest integer $r$ for which
$\log^{(r)} x<2$. 

Throughout the paper we ignore divisibility problems, e.g., in
\autoref{lem:determine-s} in \autoref{sec:elementary} we assume that
$t^n/2^{c+1}$ is an integer. Dealing with rounding issues would complicate the
presentation but does not add to the complexity of the proofs.

\subsection{Information theory}
Here we briefly review some definitions and facts from information theory 
that we use in this paper.
For a random variable $X$, we denote its binary Shannon entropy 
by $\Ent(X)$. We will also use conditional entropies
$\Ent(X\emid Y)=\Ent(X,Y)-\Ent(Y)$. Let $\mu$ and $\nu$ be two probability 
distributions, supported on the same set 
$S$. We denote the binary Kullback-Leibler 
divergence between $\mu$ and $\nu$ by
$\D(\mu\dmid \nu)$. A random variable with Bernoulli distribution with
parameter $p$ takes the value $1$ with probability $p$ and the value $0$ with
probability $1-p$. The entropy of this variable is denoted by $\BEnt(p)$.
For two reals $p,q\in (0,1)$, we denote by $\BD(p\dmid q)$ the divergence
between the Bernoulli distributions with parameters $p$ and $q$.

If $X\in[t]^n$ and $L\subseteq[n]$, then the projection of $X$ to the
coordinates in $L$ is denoted by $X_L$. Namely, $X_L$ is obtained from
$X=(X_1,\ldots,X_n)$ by keeping only the coordinates $X_i$ with $i\in L$.
The following lemma of 
Chung et al.~\cite{ChungGFS86} relates the entropy of a variable to the
entropy of its projections.
\begin{lemma}
\label{lem:ent-subset}{\rm (Chung et al.~\cite{ChungGFS86})}
Let $\supp(X)\subseteq[t]^n$. We have
$\frac{l}{n}\Ent(X) \leq \E_L[\Ent(X_L)]$, where the expectation is taken for
a uniform random $l$-subset $L$ of $[n]$.
\end{lemma}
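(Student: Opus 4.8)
The plan is to prove the statement in the stronger form due to Han: for $1\le l\le n$, set $\bar H_l\defeq\E_L[\Ent(X_L)]$ where $L$ is a uniform random $l$-subset of $[n]$; then the normalized average entropy $\bar H_l/l$ is non-increasing in $l$. Since $\bar H_n=\Ent(X)$, this gives $\E_L[\Ent(X_L)]=\bar H_l\ge\frac{l}{n}\Ent(X)$ for every $l$, which is exactly the claimed inequality. So it suffices to prove $\frac{l-1}{l}\,\bar H_l\le\bar H_{l-1}$ for every $l\ge 2$.

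First I would prove a ``local'' inequality for a single index set: for every nonempty $S\subseteq[n]$ with $|S|=l$,
\[
(l-1)\,\Ent(X_S)\ \le\ \sum_{i\in S}\Ent\bigl(X_{S\setminus\{i\}}\bigr).
\]
For each $i\in S$ the chain rule gives $\Ent(X_S)=\Ent(X_{S\setminus\{i\}})+\Ent(X_i\emid X_{S\setminus\{i\}})$; summing over $i\in S$ yields $l\,\Ent(X_S)=\sum_{i\in S}\Ent(X_{S\setminus\{i\}})+\sum_{i\in S}\Ent(X_i\emid X_{S\setminus\{i\}})$. On the other hand, fixing an arbitrary order $i_1,\dots,i_l$ of the elements of $S$, the chain rule also gives $\Ent(X_S)=\sum_{j=1}^{l}\Ent(X_{i_j}\emid X_{i_1},\dots,X_{i_{j-1}})$, and since conditioning on additional variables does not increase entropy, each summand here is at least $\Ent(X_{i_j}\emid X_{S\setminus\{i_j\}})$. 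Hence $\sum_{i\in S}\Ent(X_i\emid X_{S\setminus\{i\}})\le\Ent(X_S)$, and substituting this into the previous identity gives the displayed bound.

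Next I would average the local inequality over all $\binom{n}{l}$ index sets $S$ of size $l$. On the right-hand side each $(l-1)$-set $T$ is counted once for every $i\notin T$, i.e.\ exactly $n-l+1$ times, so the averaged inequality reads $(l-1)\binom{n}{l}\,\bar H_l\le(n-l+1)\binom{n}{l-1}\,\bar H_{l-1}$. Plugging in $\binom{n}{l}=\binom{n}{l-1}\cdot\frac{n-l+1}{l}$ and cancelling the common factor $(n-l+1)\binom{n}{l-1}$ leaves $\frac{l-1}{l}\bar H_l\le\bar H_{l-1}$, the desired monotonicity step. Iterating it for $k=l+1,\dots,n$ gives $\bar H_l/l\ge\bar H_n/n=\Ent(X)/n$, hence $\E_L[\Ent(X_L)]=\bar H_l\ge\frac{l}{n}\Ent(X)$.

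The part I expect to require the most care in writing up is the local inequality, and within it the estimate $\sum_{i\in S}\Ent(X_i\emid X_{S\setminus\{i\}})\le\Ent(X_S)$: one must compare a conditional entropy in which we condition on \emph{all} the remaining coordinates of $S$ against the chain-rule terms in which we condition only on a \emph{prefix}, and apply ``conditioning reduces entropy'' in the correct direction. Everything after that — the averaging over subsets and the binomial-coefficient bookkeeping — is routine.
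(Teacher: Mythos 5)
Your proof is correct. The paper itself gives no proof of this lemma --- it is stated with a citation to Chung et al.\ and used as a black box --- so there is no in-paper argument to compare against; what you have written is the standard derivation via Han's inequality, and it is complete. The local step $(l-1)\Ent(X_S)\le\sum_{i\in S}\Ent(X_{S\setminus\{i\}})$ is handled exactly right: the two applications of the chain rule together with $\Ent(X_{i_j}\emid X_{i_1},\dots,X_{i_{j-1}})\ge\Ent(X_{i_j}\emid X_{S\setminus\{i_j\}})$ (conditioning on more does not increase entropy) give $\sum_{i\in S}\Ent(X_i\emid X_{S\setminus\{i\}})\le\Ent(X_S)$, and the double-counting over $l$-subsets with the factor $n-l+1$ and the identity $\binom{n}{l}=\binom{n}{l-1}\frac{n-l+1}{l}$ correctly yields the monotonicity of $\bar H_l/l$, from which the stated bound follows by comparing with $l=n$. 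This is in fact slightly stronger than what the lemma asserts, which is fine.
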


\subsection{Structure of the paper}

We start in \autoref{sec:upperbound} with our protocols for the sparse set
disjointness. Note that the exists-equal problem is a special case of sparse
set disjointness, so our protocols work also for the exists-equal problem. In
the rest of the paper we establish matching lower bounds showing that the
complexity of our protocols are within a constant factor to optimal for both
the exists-equal and the sparse set disjointness problems, and for any number of rounds. 
In \autoref{sec:elementary} we give an elementary proof for the case of
single round protocols. In \autoref{sec:isoperimetry} we develop our
isoperimetric inequality and in \autoref{sec:lowerbound} we use it in our round
elimination proof to get the lower bound for multiple round protocols. Finally in
\autoref{sec:discussion} we point toward possible extensions of our
results.

\section{The upper bound}
\label{sec:upperbound}
Recall that in the communication problem $\DISJ_k^m$,
each of the two players
is given a subset of $[m]$ of size at most $k$
and they communicate in order to determine
whether their sets are disjoint or not.
In 1997, H\aa stad and Wigderson \cite{ParnafesIRWA97,HastadW07} gave a
probabilistic protocol that solves this problem with $O(k)$ bits of
communication and has constant one-sided error probability.
The protocol takes $O(\log k)$ rounds.
Let us briefly review this protocol as this is the starting point of
our protocol.

Let $S,T\subseteq[m]$ be the inputs of Alice and Bob. Observe that if they
find a set $Z$ satisfying $S\subseteq Z\subseteq [m]$, then Bob can replace
his input $T$ with $T'=T\cap Z$ as $T'\cap S=T\cap S$. The main observation is
that if $S$ and $T$ are disjoint, 
then a random set $Z\supseteq S$ will intersect $T$ in a uniform random
subset, so one can expect $|T'|\approx|T|/2$. In the H\aa stad-Wigderson
protocol the players
alternate in finding a random set that contains the current input of one of
them, effectively halving the other player's input. If in this process
the input of one of the players becomes empty, they know the original inputs
were
disjoint. If, however, the sizes of their inputs do not show the expected
exponential decrease in time, then they declare that their inputs
intersect. This introduces a small one sided error. Note that one of the two
outcomes happens in $O(\log k)$ rounds. An important observation is that
Alice can describe a random set $Z\supseteq S$ to
Bob using an expected $O(|S|)$ bits by making use of the joint random source.
This makes the total communication $O(k)$.

In our protocol proving the next theorem, we do almost the same, but we choose
the random sets $Z\supseteq S$ not uniformly, but from a biased distribution
favoring ever smaller sets. This makes the size of the input sets of the
players decrease much more rapidly, but describing the random set $Z$
to the other player becomes more costly. By carefully balancing the parameters we
optimize for the total communication given any number of rounds. When the
number of rounds reaches $\log^*k-O(1)$ the communication reaches its minimum
of $O(k)$ and the error becomes exponentially small.

\begin{theorem}\label{thm:ub} For any $r\leq \log^*k$, there is an
$r$-round probabilistic protocol for $\DISJ^m_k$ with $O(k\log^{(r)}k)$ bits
total communication. There is no error for intersecting input sets, and the
probability of error for disjoint sets can be made $O(1/\exp^{(r)}(c\log^{(r)}
k)+ \exp(-\sqrt k))\ll 1/k$ for any constant $c > 1$.

For $r=\log^*k-O(1)$ rounds this means an $O(k)$-bit protocol with error
probability $O(\exp(-\sqrt k))$.
\end{theorem}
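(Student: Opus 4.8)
The plan is to generalize the Håstad--Wigderson protocol by replacing the uniform choice of the covering set $Z \supseteq S$ with a biased choice, tuned to an aggressive schedule of target set sizes $k = k_0 > k_1 > \cdots > k_r$, where each $k_{i+1}$ is roughly the logarithm of $k_i$, so that after $r \leq \log^* k$ rounds the size of the surviving set is a small constant. The players maintain current sets $S_i, T_i$ of size at most $k_i$; in round $i$, the player holding the smaller-indexed set picks a random $Z$ with $S_i \subseteq Z \subseteq [m]$, where each element outside $S_i$ is included independently with some probability $p_i < 1/2$. The other player intersects his set with $Z$. If the sets are disjoint, the surviving set is a $p_i$-random subset of a set of size $\leq k_i$, so its expected size is at most $p_i k_i$; choosing $p_i \approx k_{i+1}/k_i$ makes the expected size $\leq k_{i+1}$. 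If at any round the surviving set exceeds (say) $2 k_{i+1}$, the players declare ``intersecting''; this is the only source of error, and it is one-sided.

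The key steps, in order: (i) fix the size schedule precisely — I would take $k_{i}$ so that $\log^{(i)} k$ governs it, e.g.\ $k_i \approx k / \prod_{j<i} (\text{something})$, but the clean choice is to make $k_i$ the relevant iterated log of $k$ scaled by a constant, and verify $\sum_i k_i = O(k \log^{(r)} k)$ since the sum is dominated by its first term $k_1 \approx \log^{(r)} k \cdot (\text{the rest})$ — actually the dominant term is $k_0 = k$ times... I need to be careful here: the total communication is $\sum_i (\text{cost of round } i)$, and the cost of round $i$ is the number of bits to describe $Z$. (ii) Bound the description cost: using the shared random source, Alice need only send Bob the ``corrections'' to a publicly sampled candidate, and by a standard argument describing a $p_i$-biased set that must contain a fixed set of size $|S_i| \leq k_i$ costs $O(k_i \log(1/p_i) + (\text{entropy of the biased part}))$ bits in expectation; since $1/p_i \approx k_i / k_{i+1} \leq k_i$, this is $O(k_i \log k_i)$, hmm — but $\log k_i \approx k_{i+1}$ by the schedule, so round-$i$ cost is $O(k_i k_{i+1} / k_i) = O(k_{i+1} \cdot \log k_i)$... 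Let me restate: with $k_{i+1} \approx \log k_i$, the cost $O(k_i \log(1/p_i)) = O(k_i \log(k_i/k_{i+1})) = O(k_i \log k_i) = O(k_i k_{i+1})$, which is too big. The correct accounting is that the cost is $O(|S_i| \log(1/p_i))$ only for the part forced to contain $S_i$, giving $O(k_i \cdot \log(k_i/k_{i+1}))$; one then checks this telescopes/is dominated so the grand total is $O(k \log^{(r)} k)$. I would verify this bound carefully round by round, as it is the crux of the quantitative claim.

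For the error analysis: in each round, conditioned on disjointness, the surviving set has a binomial-type size with mean $\leq k_{i+1}$, and a Chernoff bound gives failure probability $\exp(-\Omega(k_{i+1}))$ in round $i$; the last round contributes $\exp(-\Omega(k_r)) = O(\exp(-\sqrt k))$ by choosing the schedule so $k_r = \Theta(\sqrt k)$ in the $r = \log^* k - O(1)$ case, and in general the dominant error term is $O(1/\exp^{(r)}(c \log^{(r)} k))$ coming from the first round where $k_1 \approx c \log^{(r)} k \cdot (\cdots)$ — again I would pin down the schedule so that round $1$'s Chernoff bound reads exactly $\exp(-\Omega(k_1)) = 1/\exp^{(r)}(c\log^{(r)}k)$. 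A union bound over the $r \leq \log^* k$ rounds preserves these estimates. Finally, a minor point: the protocol as described has \emph{expected} communication $O(k \log^{(r)} k)$; to get a worst-case bound (as the theorem states ``total communication''), I would truncate — if the random description ever runs long, abort and declare ``intersecting'' — which only adds to the one-sided error by an amount absorbed into the existing bounds via Markov.

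The main obstacle I anticipate is the precise bookkeeping of the description cost of the biased random set: getting the constant-free bound $O(k_i \log(k_i/k_{i+1}))$ per round, and then checking that $\sum_{i=1}^{r} k_i \log(k_i/k_{i+1})$ collapses to $O(k \log^{(r)} k)$ under the right size schedule. Everything else (the Chernoff error bounds, the one-sidedness, the reduction $T \mapsto T \cap Z$) is routine once the schedule is fixed, but the schedule must be chosen simultaneously to make the communication sum telescope \emph{and} make all the error terms come out as claimed, so I would set up the recurrence $k_{i+1} = \Theta(\log k_i)$ first and derive both bounds from it.
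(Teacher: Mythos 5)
Your high-level plan (biased covering sets replacing the uniform ones of H\aa stad--Wigderson, a decreasing size schedule, one-sided error, Chernoff bounds per round) is exactly the paper's strategy, but the quantitative core --- the schedule itself --- is wrong, and your own arithmetic already shows it. With $k_{i+1}=\Theta(\log k_i)$ and $p_i\approx k_{i+1}/k_i$, the first round must describe a set of density $\approx(\log k)/k$ containing all $k$ elements of $S_1$, which costs $\Theta(k\log(k/\log k))=\Theta(k\log k)$ bits; this is information-theoretically forced and already exceeds the target $O(k\log^{(r)}k)$ for every $r\ge2$. You notice this (``which is too big'') but then assert that the sum ``telescopes/is dominated,'' which it does not under your schedule: the very first term blows the budget. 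The paper's schedule is qualitatively different. Setting $u=(c+1)\log^{(r)}k$, it takes $p_i=1/\exp^{(i)}u$ and set-size targets $k_i\approx k/(2^{i}\exp^{(i-1)}u)$, i.e.\ the sizes decay as a \emph{tower}, not by one iterated logarithm per round. Then round $1$ costs $k\log(1/p_1)=ku=O(k\log^{(r)}k)$ and is the dominant term, while for $i\ge2$ the cost is $k_i\log(1/p_i)\approx\bigl(k/(2^{i}\exp^{(i-1)}u)\bigr)\cdot\exp^{(i-1)}u=k/2^{i}$, which sums geometrically to $O(k)$. After $r$ rounds the surviving set has expected size $k/\exp^{(r)}u<1$, which is where the $1/\exp^{(r)}(c\log^{(r)}k)$ error term comes from (a Markov bound on the last round, not a Chernoff bound on the first, as you suggest); the $\exp(-\sqrt k)$ term arises from a separate truncation of the schedule once some $k_j$ drops below $4\sqrt k$.

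Two smaller points. First, your mechanism for transmitting $Z$ (``corrections to a publicly sampled candidate,'' then truncate via Markov) gives only expected communication and injects extra error; the paper's mechanism is to publicly generate $l_i$ independent $p_i$-biased sets and send the index of the first one containing $S_i$ (or an abort signal), which makes the message length $\lceil\log(l_i+1)\rceil$ deterministically and keeps the error one-sided with no Markov step. Second, note that in the alternating protocol the set updated in round $i$ is $S_{i-1}$, so the recursion is $k_{i+1}\approx k_{i-1}p_i$ (with $k_0=k_1=k$), not $k_{i+1}\approx k_i p_i$; this matters when you verify the Chernoff bounds. As written, the proposal does not establish the theorem: the claimed communication bound fails at round one for your choice of $k_1$.
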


\begin{proof} We start with the description of the protocol. Let $S_0$ and
  $S_1$ be the input sets of Alice and Bob,
respectively. For $1\le i\le r$, $i$ even Alice sends a message describing
a set $Z_i\supset S_i$ based on her ``current input'' $S_i$ and Bob updates
his ``current input'' $S_{i-1}$ to $S_{i+1}\defeq S_{i-1}\cap Z_i$. In odd numbered
rounds the same happens with the role of Alice and Bob reversed. We depart
from the H\aa stad-Wigderson protocol in the way we choose the sets $Z_i$:
Using the shared random source the players generate $l_i$ random subsets of
$[m]$ containing each element of $[m]$ independently and with probability
$p_i$. We will set these parameters later. The set $Z_i$ is chosen to be
the first such set containing $S_i$. Alice or Bob (depending on the parity of
$i$) sends the index of this set or ends the protocol by sending a
special error signal if none of the generated sets contain $S_i$. The protocol
ends with declaring the inputs disjoint if the error signal is never sent and
we have $S_{r+1}=\emptyset$. In all other cases the protocol ends with
declaring ``not disjoint''.

This finishes the description of the protocol except for the setting of the
parameters. Note that the error of the protocol is one-sided:
$S_0\cap S_1=S_i\cap S_{i+1}$ for $i\le r$, so intersecting inputs
cannot yield $S_{r+1}=\emptyset$.

We set the parameters (including $k_i$ used in the analysis) as follows:
\begin{align*}
u&=(c+1)\log^{(r)}k,\\
p_i&=\frac1{\exp^{(i)}u}&\hbox{for }1\le i\le r,\\
l_1&=k\exp(ku),\\
l_i&=k2^{k/2^{i-4}}&\hbox{for }2\le i\le r,\\
k_0&=k_1=k,\\
k_i&=\frac k{2^{i-4}\exp^{(i-1)}u}&\hbox{for }2\le i\le r,\\
k_{r+1}&=0.
\end{align*}

The message sent
in round $i>1$ has length $\lceil\log(l_i+1)\rceil<k/2^{i-4}+\log k+1$, thus
the total communication in all rounds but the first is $O(k)$.
The length of the first message is $\lceil\log(l_1+1)\rceil\le ku+\log
k+1$. The total communication is $O(ku)=O(ck\log^{(r)}k)$ as claimed
(recall that $c$ is a constant).

Let us assume the input pair is disjoint. To estimate the error
probability we call round $i$ {\em bad} if an error message is sent or a set
$S_{i+1}$ is created with $|S_{i+1}|>k_{i+1}$. If no bad round exists we have
$S_{r+1}=\emptyset$ and the protocol makes no error. In what follows we bound
the probability that round $i$ is bad assuming the previous rounds are not bad
and therefore having $|S_j|\le k_j$ for $0\le j\le i$.

The probability that a random set constructed in round $i$ contains $S_i$
is $p_i^{-|S_i|}\ge p_i^{-k_i}$. The probability that none of the $l_i$
sets contains $S_i$ and thus an error message is sent is therefore at most
$(1-p_i^{k_i})^{l_i}<e^{-k}$.

If no error occurs in the first bad round $i$, then $|S_{i+1}|>k_{i+1}$. Note that
in this case $S_{i+1}=S_{i-1}\cap Z_i$ contains each element of $S_{i-1}$
independently and with probability $p_i$. This is because the choice of $Z_i$
was based on it containing $S_i$, so it was independent of its intersection
with $S_{i-1}$ (recall that $S_i\cap S_{i-1}=S_1\cap S_0=\emptyset$). For
$i<r$ we use the Chernoff bound. The expected size of $S_{i+1}$ is
$|S_{i-1}|p_i\le k_{i-1}p_i\le k_{i+1}/2$, thus the probability of
$|S_{i+1}|>k_{i+1}$ is at most $2^{-k_{i+1}/4}$. Finally for the last round
$i=r$ we use the simpler estimate $p_rk_{r-1}\le k/\exp^{(r)}u$ for
$|S_{r+1}|>k_{r+1}=0$.

Summing over all these estimates we obtain the following error bound for our
protocol:
$$\Pr[\hbox{error}]\le re^{-k}+\frac k{\exp^{(r)}u}+\sum_{i=2}^r2^{-k_i/4}.$$
In case $k_r\ge4\sqrt n$ this error estimate proves the theorem. In case
$k_r<4\sqrt k$ we need to make a minor adjustments in the setting of our
parameters. We take $j$ to be the smallest value with $k_j<4\sqrt k$, modify
the parameters for round $j$ and stop the protocol after this round declaring
``disjoint'' if $S_{j+1}=\emptyset$ and ``intersecting'' otherwise. The new
parameters for round $j$ are $k'_j=4\sqrt k$, $p'_j=2^{-2\sqrt k}$,
$l'_j=k2^{8k}$. This new setting of the parameters makes the message in the
last round linear in $k$, while both the probability that round $j-1$ is bad
because it makes $|S_j|>k'_j$, or the probability that round $j$ is bad for any
reason (error message or $S_{j+1}\ne\emptyset$) is $O(2^{-\sqrt k})$. This
finishes the analysis of our protocol.
\end{proof}

\section{Lower bound for single round protocols}
\label{sec:elementary}
In this section we give an combinatorial proof that any single round randomized
protocol for the exists-equal problem with parameters $n$ and $t=4n$ has
complexity $\Omega(n\log n)$ if its error probability is at most $1/3$. As
pointed out in the Introduction, to our knowledge this is the
fist established case when solving the OR of $n$ instances of a communication
problem requires strictly more than $n$ times the complexity needed to solve a
single such instance.

We start with with a simple and standard reduction from the randomized
protocol to the deterministic one and further to a large set of inputs that
makes the first (and in this case only) message fixed. These steps are also
used in the general round elimination argument therefore we state them in
general form.

Let $\epsilon>0$ be a small constant and let $P$ be an $1/3$-error 
randomized protocol for
the exists-equal problem with parameters $n$ and $t=4n$. We repeat the
protocol $P$ in parallel taking the majority output, so that the number of
rounds does not change, the length of the messages is multiplied by a constant
and the error probability decreases below $\epsilon$. Now we fix the coins of
of this $\epsilon$-error protocol in a way to make the resulting deterministic
protocol err on at most $\epsilon$ fraction of the possible inputs.
Denote the deterministic protocol we obtain by $Q$.

\begin{lemma}
\label{lem:determine-s}
Let $Q$ be a deterministic protocol for the $\EE_n$ problem that makes at most
$\epsilon$ error on the uniform distribution.
Assume Alice sends the first message of length
$c$. There exists an $S\subset [t]^n$ of size $\mu(S)=2^{-c-1}$ such 
that the first message  of Alice is fixed when $x\in S$ and we have
$\Pr_{y\sim \mu}[Q(x,y)\neq\EE(x,y)]\leq 2\epsilon$ for all $x\in S$.
\end{lemma}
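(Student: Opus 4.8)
The plan is to use a double-averaging argument: first average over messages $m$ that Alice could send, and for each message look at the set of inputs $x$ that trigger it; then within one such preimage set, average over $x$ to locate a large subset on which the conditional error is small. Concretely, let $A_m\subseteq[t]^n$ be the set of Alice-inputs on which the first message equals $m$. The sets $A_m$ partition $[t]^n$, and since the message has length $c$ there are at most $2^c$ of them. By the pigeonhole principle on the measure $\mu$, some message $m^*$ satisfies $\mu(A_{m^*})\geq 2^{-c}$; fix such an $m^*$ and write $A=A_{m^*}$.

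Next I would apply Markov's inequality inside $A$. Define, for $x\in A$, the quantity $e(x)\defeq\Pr_{y\sim\mu}[Q(x,y)\neq\EE(x,y)]$. The hypothesis that $Q$ errs on at most an $\epsilon$-fraction of all inputs under $\mu$ says $\E_{x\sim\mu}[e(x)]\leq\epsilon$, hence $\E_{x\sim\mu}[e(x)\mid x\in A]\leq\epsilon/\mu(A)\leq\epsilon\cdot 2^c$. That bound is too weak by itself — I do not want to lose the factor $2^c$ — so instead I keep the \emph{unconditional} average and split: $\E_{x\sim\mu}\bigl[e(x)\cdot\mathbf 1[x\in A]\bigr]\leq\epsilon$. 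By Markov applied to the nonnegative variable $e(x)\cdot\mathbf 1[x\in A]$, the set of $x$ with $e(x)\cdot\mathbf 1[x\in A]>2\epsilon$ has $\mu$-measure less than $1/2$. In particular, the set $S'\defeq\{x\in A: e(x)\leq 2\epsilon\}$ satisfies $\mu(S')\geq\mu(A)-\tfrac12$. That is again not quite enough if $\mu(A)$ is barely above $2^{-c}$; the cleaner route is to first intersect: among all messages, throw away the ``bad'' $x$ (those with $e(x)>2\epsilon$), which form a set of measure $<1/2$ by Markov, and observe that the good $x$ that remain are still partitioned by at most $2^c$ messages, so some message class among the good $x$ has measure at least $(1-\tfrac12)/2^c = 2^{-c-1}$. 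Finally shrink this class to \emph{exactly} $\mu(S)=2^{-c-1}$ (ignoring divisibility as the paper does), obtaining the desired $S$: the first message is the constant $m^*$ on $S$, and $e(x)\leq 2\epsilon$ for every $x\in S$.

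The only point requiring any care is the order of operations — one must remove the high-error inputs \emph{before} pigeonholing on the message, not after, so that the $1/2$ loss from Markov and the $2^{-c}$ loss from pigeonhole multiply rather than being forced into the same bucket; doing it in the wrong order gives only $\mu(S)\geq 2^{-c}-\tfrac12$, which is vacuous when $c$ is large. I expect this reordering to be the whole of the (minor) obstacle; everything else is a one-line pigeonhole and a one-line Markov inequality, and the reduction from randomized to deterministic $Q$ with $\epsilon$-average error has already been carried out in the paragraph preceding the lemma.
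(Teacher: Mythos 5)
Your final argument---apply Markov's inequality to $e(x)=\Pr_{y\sim\mu}[Q(x,y)\neq\EE(x,y)]$ to keep a set of measure at least $1/2$ with $e(x)\le 2\epsilon$, then pigeonhole this set over the at most $2^c$ first messages to extract a part of measure at least $2^{-c-1}$---is exactly the paper's proof, including the crucial ordering of Markov before pigeonhole that you correctly identify. The proposal is correct and essentially identical to the paper's argument.
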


\begin{proof}
Note that the quantity $e(x)=\Pr_{y\sim \mu}[Q(x,y)\neq\EE(x,y)]$, 
averaged over all $x$, is the error probability of $Q$ on the uniform input,
hence is at most $\epsilon$. Therefore for at least half of $x$,
we have $e(x)\leq 2\epsilon$. The first message of Alice partitions this half
into
at most $2^c$ subsets. We pick $S$ to consist of $t^n/2^{c+1}$ vectors of the
same part: at least one part must have this many elements.
\end{proof}

We fix a set $S$ as guaranteed by the lemma. We assume we started with a
single round protocol, so $Q(x,y)=Q(x',y)$ whenever $x,x'\in S$. Indeed, Alice
sends the same message by the choice of $S$ and then the output is determined
by Bob, who has the same input in the two cases.

We call a pair $(x,y)$ {\em bad} if $x\in S$, $y\in[t]^n$ and $Q$ errs on this
input, i.e., $Q(x,y)\ne\EE(x,y)$. Let $b$ be the number of bad pairs. By
\autoref{lem:determine-s} each $x\in|S|$ is involved in at most $2\epsilon
t^n$ bad pairs, so we have
$$b\le2\epsilon|S|t^n.$$
We call a triple
$(x,x',y)$ {\em bad} if $x,x'\in S$, $y\in[t]^n$, $\EE(x,y)=1$ and
$\EE(x',y)=0$. The proof is based on double counting the number $z$ of bad
triples.
Note that for a bad triple $(x,x',y)$ we have $Q(x,y)=Q(x',y)$ but
$\EE(x,y)\ne\EE(x',y)$, so $Q$ must err on either $(x,y)$ or $(x',y)$ making
one of these pairs bad. Any pair (bad or not) is involved in at most $|S|$ bad
triples, so we have
$$z\le b|S|\le2\epsilon|S|^2t^n.$$

Let us fix arbitrary $x,x'\in S$ with $\Match(x,x')\le n/2$. We estimate the
number of $y\in[t]^n$ that makes $(x,x',y)$ a bad triple. Such a $y$ must have
$\Match(x,y)>\Match(x',y)=0$. To simplify the calculation we only count the
vectors $y$ with $\Match(x,y)=1$. The match between $y$ and $x$ can occur at
any position $i$ with $x_i\ne x'_i$. After fixing the coordinate $y_i=x_i$ we
can pick the remaining coordinates $y_j$ of $y$ freely as long as we avoid
$x_j$ and $x'_j$. Thus we have
$$|\{y\emid(x,x'y)\hbox{ is
  bad}\}|\ge(n-\Match(x,y))(t-2)^{n-1}\ge(n/2)(t-2)^{n-1}>t^n/14,$$ 
where in the last inequality we used $t=4n$. Let $s$ be the size of the
Hamming ball $B_{n/2}(x)=\{y\in[t]^n\emid\Match(x,y)>n/2\}$. By the Chernoff
bound we have $s<t^n/n^{n/2}$ (using $t=4n$ again).
For a fixed $x$ we have at least
$|S|-s$ choices for $x'\in S$ with $\Match(x,x')\le n/2$ when the above bound
for triples apply. Thus we have
$$z\ge|S|(|S|-s)t^n/14.$$
Combining this with the lower bound on the number of bad triples we get
$$28\epsilon|S|\ge|S|-s.$$

Therefore we conclude that we either have large error $\epsilon>1/56$ or
else we have $|S|\le2s<2t^n/n^{n/2}$. As we have $|S|=t^n/2^{c+1}$ the latter
possibility implies
$$c\ge n\log n/2-2.$$
Summarizing we have the following.

\begin{theorem}
\label{thm:singleround}
A single round probabilistic protocol for $\EE_n$ with error probability $1/3$
has complexity $\Omega(n\log n)$.

A single round deterministic protocol for $\EE_n$ that
errs on at most $1/56$ fraction of the inputs has complexity at least $n\log
n/2-2$.
\end{theorem}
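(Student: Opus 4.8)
The plan is to follow the standard randomized-to-deterministic reduction and then run a double-counting argument on ``bad triples'' exactly as set up in the preamble. First I would apply the usual parallel repetition plus coin-fixing: amplify the $1/3$-error protocol so its error drops below a small constant $\epsilon$ (say $\epsilon<1/56$), at the cost of multiplying the single message length by a constant, then fix the shared randomness to get a deterministic protocol $Q$ erring on at most $\epsilon$ fraction of uniform inputs. This reduces the first (deterministic) claim to the second one: if the deterministic bound is $c \ge n\log n/2 - 2$, then the randomized protocol has a message of length $\Omega(n\log n)$.

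For the deterministic statement I would invoke \autoref{lem:determine-s} to extract a set $S\subseteq[t]^n$ with $\mu(S)=2^{-c-1}$ on which Alice's single message is constant and $\Pr_{y\sim\mu}[Q(x,y)\ne\EE(x,y)]\le 2\epsilon$ for every $x\in S$. Since the protocol has one round, $Q(x,y)$ depends on $x$ only through Alice's message, so $Q(x,y)=Q(x',y)$ for all $x,x'\in S$. Now I would count bad triples $(x,x',y)$ with $x,x'\in S$, $\EE(x,y)=1$, $\EE(x',y)=0$. Upper bound: such a triple forces $Q$ to err on $(x,y)$ or on $(x',y)$, and each pair extends to at most $|S|$ triples, giving $z\le b|S|\le 2\epsilon|S|^2 t^n$ where $b\le 2\epsilon|S|t^n$ is the number of bad pairs. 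Lower bound: for fixed $x,x'\in S$ with $\Match(x,x')\le n/2$, the number of $y$ making $(x,x',y)$ bad is at least the number with $\Match(x,y)=1$ and $\Match(x',y)=0$, which is at least $(n/2)(t-2)^{n-1} > t^n/14$ using $t=4n$; and a Chernoff bound shows all but $s<t^n/n^{n/2}$ vectors $x'\in S$ satisfy $\Match(x,x')\le n/2$, so $z \ge |S|(|S|-s)t^n/14$.

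Combining the two bounds gives $28\epsilon|S| \ge |S|-s$, so unless $\epsilon > 1/56$ we get $|S|\le 2s < 2t^n/n^{n/2}$; plugging in $|S| = t^n/2^{c+1}$ yields $c \ge (n\log n)/2 - 2$, proving the deterministic bound, and the reduction above then gives the randomized $\Omega(n\log n)$ bound. The step I expect to be the main obstacle is the lower bound on $z$: one must be careful that restricting to $\Match(x,y)=1$ (rather than all $y$ with $\EE(x,y)=1$) still leaves enough valid $y$, and that the Chernoff estimate on the Hamming ball $B_{n/2}(x)$ is tight enough — these are where the concrete constant $t=4n$ is doing real work, and getting the constants to line up so that the $\epsilon>1/56$ threshold emerges cleanly is the delicate part. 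Everything else is bookkeeping once the triple-counting framework is fixed.
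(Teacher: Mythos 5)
Your proposal is correct and follows essentially the same route as the paper: the same amplify-and-fix reduction to a deterministic protocol, the same use of \autoref{lem:determine-s} to fix Alice's message on a large set $S$, and the same double counting of bad triples with the identical constants ($t^n/14$, the Chernoff bound $s<t^n/n^{n/2}$, and the $1/56$ threshold). The step you flag as delicate is handled exactly as you describe, by counting only the $y$ with $\Match(x,y)=1$ that match $x$ at a coordinate where $x$ and $x'$ differ and avoid both $x_j$ and $x'_j$ elsewhere.
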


\section{An isoperimetric inequality on the discrete grid}
\label{sec:isoperimetry}

The isoperimetric problem on the 
Boolean cube $\{0,1\}^n$ proved
extremely useful in theoretical computer science. 
The problem is to determine the set $S\subseteq \{0,1\}^n$ of a fixed
cardinality with the smallest ``perimeter'', or more generally, to establish
connection between the size of a set and the size of its boundary. 
Here the boundary can be defined in several ways. 
Considering the Boolean cube as a graph
where vertices of Hamming distance 1  are connected, 
the {\em edge boundary} of a set $S$ is defined as 
the set of edges connecting $S$ and its complement,
while the {\em vertex boundary} consists of the vertices 
outside $S$ having a neighbor in $S$.

Harper \cite{Harper64} showed that the vertex boundary
of a Hamming ball is smallest among all sets of equal size, and the same holds 
for the edge boundary of a subcube. 
These results can be generalized to other cardinalities \cite{Hart76}; 
see the survey by Bezrukov \cite{Bezrukov94}.

Consider the metric space over the set $[t]^n$ endowed 
with the Hamming distance.
Let $f$ be a concave function on the nonnegative integers 
and $1\le M<n$ be an integer. 
We consider the following
value as a generalized perimeter of a set $S\subseteq[t]^n$:
\begin{align*}
\E_{x\sim\mu}[f\left(\left|B_M(x)\cap S\right|\right)],
\end{align*}
where $B_M(x)=\{y\in[t]^n\mid\Match(x,y)\ge M\}$ is the radius $n-M$ Hamming
ball around $x$. Note that when $M=n-1$ and  $f$ is the counting function
given as $f(0)=0$ and $f(l)=1$ for $l>0$ (which is concave),
the above quantity is exactly the
normalized size of the vertex boundary of $S$. For other concave functions $f$
and parameters $M$ this quantity can still be considered a measure of how
``spread out'' the set $S$ is. We conjecture that $n$-dimensional boxes
minimize this measure in every case.

\begin{conjecture}
\label{conj:product}
Let $1\le k\le t$ and $1\le M<n$ be integers. 
Let $S$ be an arbitrary subset of $[t]^n$ of size $k^n$ and $P=[k]^n$.
We have
\begin{align*}
\E_{x\sim\mu}[f\left(\left|B_M(x)\cap P\right|\right)]\leq
\E_{x\sim\mu}[f\left(\left|B_M(x)\cap S\right|\right)].
\end{align*}
\end{conjecture}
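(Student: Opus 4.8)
The plan for proving \autoref{conj:product} is to establish it by \emph{compression}. For a coordinate $i\in[n]$, let $C_i$ be the down-compression that, on every axis-parallel line $\ell$ in direction $i$, replaces the occupied values by the initial segment $\{1,\dots,|S\cap\ell|\}$. These operators are size-preserving, and iterating all of them drives any $S$ to a common fixed point $S^*$ that is a down-set for the coordinatewise order on $[t]^n$. The first thing to prove is that $\Phi(S)\defeq\E_{x\sim\mu}[f(|B_M(x)\cap S|)]$ never increases under a $C_i$. Because $B_M(x)$ is not a product set, this is not a one-dimensional statement; the natural route is to freeze every coordinate of the center $x$ and of the points of $S$ except the $i$-th, note that within the relevant line $C_i$ is just an initial-segment rearrangement, and run a four-point exchange argument: each elementary move of $C_i$ (pushing a point of $S$ from value $b$ down to an empty value $a<b$) alters $|B_M(x)\cap S|$ for only a controlled family of centers $x$, and the net change of $\sum_x f(\cdot)$ is $\le 0$ since $f$ is concave. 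This monotonicity step is the first real obstacle — it is exactly where the Hamming geometry and the concavity of $f$ must be reconciled.

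Compression alone does not finish the proof: unlike in Harper-type theorems, down-sets of size $k^n$ are very far from unique and need not be boxes. To squeeze $S^*$ down to $[k]^n$ I would use the slice decomposition along coordinate $1$. Writing $S^*=\bigsqcup_{a\in[t]}\{a\}\times S^*_a$ with $S^*_1\supseteq\cdots\supseteq S^*_t$, one has for $x=(x_1,x')$ the identity
\[
\bigl|B_M(x)\cap S^*\bigr|=\sum_{a\in[t]}\bigl|B_M(x')\cap S^*_a\bigr|+\bigl|B_{M-1}(x')\cap S^*_{x_1}\bigr|-\bigl|B_M(x')\cap S^*_{x_1}\bigr|,
\]
so that $\Phi(S^*)=\tfrac1t\sum_{a}\E_{x'}\bigl[f\bigl(G(x')+\delta_a(x')\bigr)\bigr]$, where $G(x')=\sum_a|B_M(x')\cap S^*_a|$ is independent of $a$ and $\delta_a(x')=|B_{M-1}(x')\cap S^*_a|-|B_M(x')\cap S^*_a|\ge0$. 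The target is then: among nested slice systems of total size $k^n$, the choice $S^*_a=[k]^{n-1}$ for $a\le k$ and $S^*_a=\emptyset$ otherwise is optimal. I would attack this by combining the inductive hypothesis applied inside each slice with a further equalizing shift between two slices (a two-coordinate, Bollob\'as--Leader-type compression), reducing the problem to a one-parameter optimization over the slice sizes that is controlled by the concavity of $f$.

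The outer induction on $n$ is forced, but the coupling through $G(x')$ and the simultaneous appearance of $B_M$ and $B_{M-1}$ on each slice mean the inductive statement cannot be about a single functional; the hypothesis must be strong enough to survive the slice reduction. The natural candidate, and what I expect to be the genuine crux, is the majorization assertion that for every $m$ the sorted vector $\bigl(|B_m(x')\cap[k]^{n-1}|\bigr)_{x'}$ majorizes $\bigl(|B_m(x')\cap S'|\bigr)_{x'}$ for all $S'\subseteq[t]^{n-1}$ — together with an all-sizes version, since slices need not have perfect-power size — which would in fact yield \autoref{conj:product} for every concave $f$ at once. Showing that the compression and slice-shift steps preserve this stronger invariant, rather than just one fixed $\Phi$, is where the real difficulty lies; if that turns out to be intractable, a sensible fallback is to retreat to the weaker bound of \autoref{thm:isoperimetry}, or to settle \autoref{conj:product} only for the counting function $f$ (the vertex-boundary regime, where grid isoperimetry offers more leverage) and for linear $f$ (where it reduces to Lindsey's theorem), interpolating in between.
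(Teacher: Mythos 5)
This statement is \autoref{conj:product}, which the paper explicitly leaves open (``a proof of \autoref{conj:product} remained elusive''), so there is no proof in the paper to match against; the question is whether your proposal closes the gap, and it does not. Your first stage --- that the functional $\E_{x\sim\mu}[f(|B_M(x)\cap S|)]$ does not increase under coordinatewise down-compression, proved by a local exchange argument plus concavity of $f$ --- is sound and is in fact exactly what the paper carries out in \autoref{lem:down} and \autoref{lem:list}. So the ``first real obstacle'' you identify is not an obstacle at all; it is the part that is already known.

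The genuine difficulty is entirely in your second stage: showing that among ideals (down-sets) of size $k^n$ the box $[k]^n$ minimizes the functional. Your slice identity along coordinate $1$ is correct, but the program you build on it is not a proof: the inductive hypothesis must be upgraded to the majorization statement you name (which, by Hardy--Littlewood--P\'olya and the fact that $\sum_x|B_M(x)\cap S|$ depends only on $|S|$, is \emph{equivalent} to the conjecture for all concave $f$, not merely sufficient), and you give no argument that the slice-shift step preserves it --- you explicitly flag this as possibly ``intractable'' and propose retreating to \autoref{thm:isoperimetry} or to the linear/counting special cases. That concession is the whole problem: the paper itself could not close this step and instead circumvents it by extracting, via \autoref{lem:find-prod}, a large combinatorial box inside the ideal supported on only an $n/5$ fraction of the coordinates, settling for the approximate \autoref{thm:isoperimetry}. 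As written, your proposal is a reasonable research plan that reproduces the known half of the argument and leaves the conjectural half conjectural.
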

Even though a proof of \autoref{conj:product} remained elusive, 
in \autoref{thm:isoperimetry}, 
we prove an approximate version of this result, where, for technical reasons,
we have to restrict our attention to a small fraction of the coordinates. Having
this weaker result allows us to prove our communication complexity lower bound
in the next section but proving the conjecture here would simplify this proof.

We start the technical part of this section by introducing the notation we
will use.
For $x,y\in[t]^n$ and $i\in[n]$ we write $x\sim_iy$ if $x_j=y_j$ for
$j\in[n]\setminus\{i\}$. Observe that $\sim_i$ is an equivalence relation.
A set $K\subseteq [t]^n$ is called an {\em $i$-ideal} if $x\sim_i y$,
$x_i<y_i$ and $y\in K$ implies $x\in K$. We call a set $K\subseteq[t]^n$ an {\em ideal} if it is an
$i$-ideal for all $i\in[n]$.
%

For $i\in[n]$ and $x\in[t]^n$ we define
$\down_{i}(x)=(x_1,\ldots,x_{i-1},x_i-1,x_{i+1},\ldots,x_n)$. We have
$\down_i(x)\in[t]^n$ whenever $x_i>1$.
Let $K\subseteq [t]^n$ be a set, $i\in[n]$ and $2\le a\in[t]$. For
$x\in K$, we define $\down_{i,a}(x,K)=\down_i(x)$ if $x_i=a$ and
$\down_i(x)\notin K$ and we set $\down_{i,a}(x,K)=x$ otherwise. 
We further define
$\down_{i,a}(K)=\{\down_{i,a}(x,K)\mid x\in K\}$.
For $K\subseteq[t]^n$ and $i\in[n]$ we define
\begin{align*}
\down_i(K)=\big\{y\in[t]^n \mid y_i\le|\{z\in K\mid y\sim_iz\}|\big\}.
\end{align*}
Finally for $K\subseteq[t]^n$ we define
\begin{align*}
\down(K)=\down_1(\down_2(\ldots\down_n(K)\ldots)).
\end{align*}
The following lemma states few simple observations about these down
operations.
\begin{lemma}
\label{lem:down}
Let $K\subseteq[t]^n$ be a set and let $i,j\in[n]$ be integers. 
The following hold.
\begin{enumerate}[(i)]
\item $\down_i(K)$ can be obtained from $K$ by applying several
  operations $\down_{i,a}$.
\item $|\down_{i,a}(K)|=|K|$ for each $2\le a\le t$, $|\down_i(K)|=|K|$ and
$|\down(K)|=|K|$.
\item $\down_i(K)$ is an $i$-ideal and
if $K$ is a $j$-ideal, then $\down_i(K)$ is also a $j$-ideal.
\item $\down(K)$ is an ideal. For any $x\in\down(K)$ we have $P\defeq[x_1]\times[x_2]\times\cdots\times[x_n]\subseteq\down(K)$
  and there exists a set $T\subseteq K$ with
  $P= \down(T)$.
\end{enumerate}
\end{lemma}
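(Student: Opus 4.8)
The guiding picture is that every operation here acts one $\sim_i$-equivalence class (``line'') at a time and never changes how many elements of $K$ sit on a given line. First I would check that $x\mapsto\down_{i,a}(x,K)$ is injective on $K$: two elements of $K$ with a common image $w$ lie on the same line as $w$, and since $\down_{i,a}$ can only move a point $x$ with $x_i=a$ down to $\down_i(x)$, and only when $\down_i(x)\notin K$, a short case distinction forces the two points to coincide; hence $|\down_{i,a}(K)|=|K|$. For (i) I would use the potential $W(K)=\sum_{x\in K}x_i$: applying any $\down_{i,a}$ either leaves $K$ unchanged or strictly decreases $W$, so repeatedly applying $\down_{i,2},\down_{i,3},\dots,\down_{i,t}$ (cycling as needed) terminates at a set $K^\star$ on which no $\down_{i,a}$ has any effect; on each line that last property says exactly that $K^\star$ meets the line in an initial segment, and since the number of $K$-elements on each line is invariant under all these operations, $K^\star=\down_i(K)$. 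Then $|\down_i(K)|=|K|$ follows, and $|\down(K)|=|K|$ by composing.

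\textbf{Part (iii).} That $\down_i(K)$ is an $i$-ideal is immediate since $|\{z\in K:z\sim_i y\}|$ depends only on the line of $y$. For the preservation claim, by (i) it suffices to show that a single $\down_{i,a}$ preserves being a $j$-ideal for $j\ne i$. I would write $\down_{i,a}(K)=(K\setminus A)\cup A'$ with $A=\{z\in K:z_i=a,\ \down_i(z)\notin K\}$ and $A'=\{\down_i(z):z\in A\}$, then take $y\in\down_{i,a}(K)$, $x\sim_j y$, $x_j<y_j$ (so $x_i=y_i$), and argue $x\in\down_{i,a}(K)$ by splitting on whether $y\in K\setminus A$ or $y\in A'$. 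The mechanism is to pass between a point and its shift $\up_i(\cdot)$ in coordinate $i$ and invoke the $j$-ideal property of $K$ there; for instance when $y\in A'$ one places $\up_i(x)$ in $K$ using that $\up_i(y)\in A\subseteq K$, and then either $x\in K$ with $x_i=a-1\ne a$ so $x\notin A$, or $x\notin K$ in which case $\up_i(x)\in A$ and $x\in A'$. Iterating along the sequence of $\down_{i,a}$'s from (i), together with $\down_i(K)$ always being an $i$-ideal, gives the statement.

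\textbf{Part (iv).} That $\down(K)$ is an ideal follows by a downward induction on the operations: $\down_n(K)$ is an $n$-ideal by (iii), and if the set produced after $\down_{l+1},\dots,\down_n$ is a $j$-ideal for all $j\in\{l+1,\dots,n\}$, then by (iii) applying $\down_l$ keeps it a $j$-ideal for all $j\in\{l,\dots,n\}$; at $l=1$ this says $\down(K)$ is an ideal. An ideal is closed under coordinatewise decrease (lower one coordinate at a time, staying inside), so $x\in\down(K)$ gives $P=[x_1]\times\cdots\times[x_n]\subseteq\down(K)$. For the set $T\subseteq K$ with $\down(T)=P$ I would induct on $n$ and slice on the first coordinate: writing $K^v$ for the slice $\{x_1=v\}$ of $K$, one has $\down(K)=\down_1\big(\bigcup_v\{v\}\times\down(K^v)\big)$ with the inner $\down$ in dimension $n-1$. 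The crucial point is that each $\down(K^v)$ is an ideal, so the sub-box $Q:=[x_2]\times\cdots\times[x_n]$ lies in $\down(K^v)$ as soon as its top corner $(x_2,\dots,x_n)$ does, and $P\subseteq\down(K)$ says precisely that this corner lies in $\down(K^v)$ for at least $x_1$ values $v$. Choosing $x_1$ such values $v_1,\dots,v_{x_1}$, the induction hypothesis gives $T_j\subseteq K^{v_j}$ with $\down(T_j)=Q$, and $T=\bigcup_j\{v_j\}\times T_j$ then satisfies $\down(T)=[x_1]\times Q=P$ (the base case $n=1$ being immediate since $\down_1(K)=[\,|K|\,]$).

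\textbf{Main obstacle.} The line-by-line counting in (i)--(ii) and the downward induction for ``ideal'' in (iv) are routine; the two steps requiring genuine care are the case analysis in (iii) showing one $\down_{i,a}$ preserves $j$-ideals — the ``shift up, use the ideal property, shift back down'' argument — and, in (iv), the realization that the coordinatewise membership condition $P\subseteq\down(K)$ can be exploited slice-by-slice only because each $\down(K^v)$ is downward closed, which converts ``for every corner-dominated $w$, many $v$'' into ``many $v$ that work for the whole box $Q$ at once''.
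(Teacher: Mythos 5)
Your proof is correct, and parts (i), (ii) and (iv) follow essentially the paper's own route: the potential $\sum_{x\in K}x_i$ for termination in (i), invariance of the per-line counts for (ii), and in (iv) the same slicing on the first coordinate, using that $\down_2,\dots,\down_n$ act independently on the hyperplanes $\{y_1=v\}$ and that $x\in\down_1(K')$ supplies at least $x_1$ slices whose (ideal) image contains the corner $(x_2,\dots,x_n)$. The one genuine divergence is in (iii). The paper argues directly from the counting definition of $\down_i$: if $y\in\down_i(K)$ there are $y_i$ distinct $z\in K$ with $z\sim_i y$, and applying $\down_j$ to these (legitimate since $K$ is a $j$-ideal) yields $y_i$ distinct elements of $K$ on the line of $\down_j(y)$, hence $\down_j(y)\in\down_i(K)$ --- a two-line argument that does not rely on (i). You instead prove that a single $\down_{i,a}$ preserves the $j$-ideal property and iterate along the sequence from (i). This is more local and tracks exactly which points move, at the cost of the $\up_i$ case analysis and of making (iii) logically depend on (i); your handling of the delicate case $y\in A'$ (namely $\up_i(\down_j(y))=\down_j(\up_i(y))\in K$, then either $\down_j(y)\in K\setminus A$ because its $i$-th coordinate is $a-1\ne a$, or $\down_j(y)\in A'$) is right. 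Both approaches are sound; the paper's is shorter, yours is somewhat more self-contained at the level of the elementary operations.
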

\begin{proof}
For statement (i) notice that as long as $K$ is not 
an $i$-ideal one of the operations $\down_{i,a}$ 
will not fix $K$ and hence will decrease $\sum_{x\in K}x_i$. 
Thus a finite sequence of these operations will transform $K$ into
an $i$-ideal. 
It is easy to see that the operations $\down_{i,a}$ preserve the
number of elements in each equivalence class of $\sim_i$, 
thus the $i$-ideal we arrive at must indeed be $\down_i(K)$.

Statement (ii) follows directly from the definitions of each of
these $\down$ operations.
 
The first claim of statement (iii), namely that $\down_i(K)$ is an $i$-ideal,
is trivial from the definition. 
Now assume $j\ne i$ and $K$ is a $j$-ideal, $y\in\down_i(K)$ and $y_j>1$.
To see that $\down_i(K)$ is a $j$-ideal it is enough to prove that
$\down_j(y)\in\down_i(K)$. Since $y\in\down_i(K)$,
there are $y_i$ distinct vectors $z\in K$ that satisfy $z\sim_i y$. 
Considering the vectors $\down_j(z)\sim_i\down_j(y)$ and
using that these distinct vectors are in the $j$-ideal $K$ proves that
$\down_j(y)$ is indeed contained in $\down_i(K)$.

By statement (iii), 
$\down(K)$ is an $i$-ideal for each $i\in [n]$.
Therefore $\down(K)$ is an ideal and the first part of statement (iv), that is, 
$P\subseteq K'$ follows.
We prove the existence of suitable $T$ by induction on the
dimension $n$. The base case $n=0$ (or even $n=1$) is trivial. For the
inductive step consider $K'=\down_2(\down_3(\ldots\down_n(K)\ldots))$.
As $x\in\down(K)=\down_1(K')$, we have distinct vectors
$x^{(k)}\in K'$ for $k=1,\ldots, x_1$, satisfying $x^{(k)}\sim_1x$.
Notice
that the construction of $K'$ from $K$ is performed 
independently on each of the $(n-1)$-dimensional ``hyperplanes''
$S^l=\{y\in[t]^n\mid y_1=l\}$ as none of the operations
$\down_2,\ldots,\down_n$ change the first coordinate of the vectors.
We apply
the inductive hypothesis to obtain the sets
$T^{(k)}\subseteq S^{x^{(k)}_1}\cap K$
such that
$\down_2(\ldots\down_n(T^{(k)})\ldots)=\{x^{(k)}_1\}
\times[x_2]\times\cdots\times[x_n]$. Using
again that these sets are in distinct hyperplanes and the operations
$\down_2,\ldots,\down_n$ act separately on the hyperplanes $S^l$, we get for
$T:=\cup_{k=1}^{x_1}T^{(k)}$ that
$$\down_2(\dots\down_n(T)\dots)=\{x^{(k)}_1\mid
k\in[x_1]\}\times[x_2]\times\cdots\times[x_n].$$
Applying $\down_1$ on both sides finishes the proof of this last part of the
lemma.
\end{proof}

For sets $x\in[t]^n$, $I\subseteq[n]$, 
and integer $M\in[n]$ we define
$B_{I,M}(x)=\{y\in[t]^n\mid\Match(x_I,y_I)\ge M\}$.
The projection of $B_{I,M}$ to the coordinates in $I$ is the Hamming ball of
radius $|I|-M$ around the projection of $x$.
\begin{lemma}
\label{lem:list}
Let $I\subseteq[n]$, $M\in[n]$ and let $f$ be a concave function on the
nonnegative integers. For arbitrary $K\subseteq [t]^n$ we have
$$\E_{x\sim\mu}[f(|B_{I,M}(x)\cap\down(K)|)]\le\E_{x\sim\mu}[f(|B_{I,M}(x)\cap
K|)].$$
\end{lemma}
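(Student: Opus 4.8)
The plan is to reduce the inequality to a single elementary compression step and then iterate. By \autoref{lem:down}(i), $\down_i(K)$ is obtained from $K$ by finitely many operations $\down_{i,a}$ with $2\le a\le t$, and by definition $\down=\down_1(\down_2(\cdots\down_n(\cdot)\cdots))$; hence it suffices to show that for every $K\subseteq[t]^n$, every $i\in[n]$ and every $a\in\{2,\dots,t\}$, writing $K'=\down_{i,a}(K)$,
\[
\E_{x\sim\mu}\big[f(|B_{I,M}(x)\cap K'|)\big]\le\E_{x\sim\mu}\big[f(|B_{I,M}(x)\cap K|)\big],
\]
and then chain these inequalities along the sequence of $\down_{i,a}$'s that assembles $\down$. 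I will use two facts about $\down_{i,a}$ from (the proof of) \autoref{lem:down}: it preserves $|C\cap K|$ for every $\sim_i$-class $C$, and $K'$ is obtained from $K$ by moving, for each $u$ with $u_i=a$, $u\in K$ and $\down_i(u)\notin K$, the point $u$ down to $\down_i(u)$ and leaving every other membership unchanged. Equivalently, inside each $\sim_i$-class, $K$ and $K'$ differ at most at coordinate-$i$ values $a-1$ and $a$, where an occupied level $a$ is emptied into an empty level $a-1$.

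If $i\notin I$, then whether $y\in B_{I,M}(x)$ depends neither on $y_i$ nor on $x_i$, so $B_{I,M}(x)$ is a union of complete $\sim_i$-classes and $x\mapsto|B_{I,M}(x)\cap K|$ (being a sum of $|C\cap K|$ over those classes $C$) is unchanged when $K$ is replaced by $K'$; equality holds. So assume $i\in I$ and set $I'=I\setminus\{i\}$. Write $x=(x',x_i)$ and $y=(y',y_i)$ with primes denoting the coordinates in $[n]\setminus\{i\}$, and put $m(x',y')=\Match(x_{I'},y_{I'})$. Since $\Match(x_I,y_I)=m(x',y')+[x_i=y_i]$, we have $y\in B_{I,M}(x)$ iff $m(x',y')\ge M$, or $m(x',y')=M-1$ and $x_i=y_i$. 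Thus, with $A(x')=\{(y',y_i):m(x',y')\ge M\}$ (a union of $\sim_i$-classes, independent of $x_i$) and $D(x')=\{y':m(x',y')=M-1\}$ (a set of ``rest'' vectors depending only on $I'$-coordinates, hence untouched by the compression),
\[
|B_{I,M}(x)\cap K|=\alpha(x')+g_{x'}(x_i),\qquad \alpha(x')=|A(x')\cap K|,\ \ g_{x'}(c)=\big|\{\,y'\in D(x'):(y',c)\in K\,\}\big|.
\]

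Next I would note that $\alpha(x')$ is preserved by the compression (it is a sum of $|C\cap K|$ over $\sim_i$-classes $C$) and so is $D(x')$, so that only $g_{x'}$ changes, and only at $c\in\{a-1,a\}$. Writing $P_{a-1}=\{y'\in D(x'):(y',a-1)\in K\}$ and $P_a=\{y'\in D(x'):(y',a)\in K\}$, the explicit form of $K'$ gives $g_{x'}'(a-1)=|P_{a-1}\cup P_a|$ and $g_{x'}'(a)=|P_{a-1}\cap P_a|$; hence with $p=|P_{a-1}|$, $q=|P_a|$, $p'=|P_{a-1}\cup P_a|$, $q'=|P_{a-1}\cap P_a|$ we have $p'+q'=p+q$ and $p'\ge\max(p,q)$, i.e.\ $(p',q')$ is at least as spread out as $(p,q)$. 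Grouping the expectation by $\sim_i$-classes of $x$ (sum over $x'$, then over $x_i\in[t]$), only the terms with $x_i\in\{a-1,a\}$ change, so
\begin{align*}
&t^n\,\E_{x\sim\mu}\big[f(|B_{I,M}(x)\cap K'|)-f(|B_{I,M}(x)\cap K|)\big]\\
&\qquad=\sum_{x'}\Big(f(\alpha(x')+p')+f(\alpha(x')+q')-f(\alpha(x')+p)-f(\alpha(x')+q)\Big),
\end{align*}
and each summand is $\le 0$: the map $c\mapsto f(\alpha(x')+c)$ is concave, and for a concave $g$ one has $g(b)+g(c)\ge g(a)+g(d)$ whenever $a\le b\le c\le d$ and $a+d=b+c$ — apply this with $\{a,d\}=\{\alpha(x')+q',\alpha(x')+p'\}$ and $\{b,c\}=\{\alpha(x')+q,\alpha(x')+p\}$, ordered so that $a\le b\le c\le d$ (legitimate since $q'\le\min(p,q)$ and $p'\ge\max(p,q)$). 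Summing over $x'$ proves the one-step inequality, and iterating over the $\down_{i,a}$-steps that build $\down$ (via \autoref{lem:down}(i)) completes the proof.

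The main obstacle is the bookkeeping in the case $i\in I$: correctly isolating the single quantity $g_{x'}$ that the compression alters, verifying that $\alpha(x')$, $D(x')$ and all class-wise counts are invariant, and checking that one $\down_{i,a}$-step acts on $g_{x'}$ at levels $a-1,a$ exactly as a spreading move, so that concavity of $f$ yields a \emph{decrease}. Once that structural picture is in place the concavity estimate is the standard four-point inequality and the iteration is immediate.
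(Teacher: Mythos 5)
Your proof is correct and follows essentially the same route as the paper: reduce via Lemma~\ref{lem:down}(i) to a single $\down_{i,a}$ compression step, observe that only $x$ with $i\in I$ and $x_i\in\{a-1,a\}$ are affected, verify that the pair of intersection counts at levels $a-1,a$ becomes more spread out while keeping the same sum, and conclude by the four-point concavity inequality. The only difference is bookkeeping — you isolate the varying part through the $\alpha(x')+g_{x'}(x_i)$ decomposition and the sets $P_{a-1},P_a$, whereas the paper pairs $x$ with $\down_i(x)$ and tracks the set $C$ of moved points with match exactly $M$ — but the argument is the same.
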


\begin{proof}
By \autoref{lem:down}(i), the set $\down(K)$ can be obtained from 
$K$ by a series of operations 
$\down_{i,a}$ with various $i\in[n]$ and $2\le a\le t$.
Therefore, it is enough to prove that the expectation in the
lemma does not increase in any one step. 
Let us fix $i\in[n]$ and $2\le a\le t$. 
We write $N_x=B_{I,M}(x)\cap K$ and
$N'_x=B_{I,M}(x)\cap\down_{i,a}(K)$ for $x\in[t]^n$. We need to prove that
\begin{align*}
\E_{x\sim\mu}[f(|N_x|)]\ge\E_{x\sim\mu}[f(|N'_x|)].
\end{align*} 
Note that $|N_x|=|N'_x|$ whenever $i\notin I$ or $x_i\notin\{a,a-1\}$.
Thus, we can assume $i\in I$ and concentrate on $x\in[t]^n$ with
$x_i\in\{a,a-1\}$. It is enough to prove $f(|N_x|)+f(|N_y|)\ge
f(|N'_x|)+f(|N'_y|)$ for any pair of vectors $x,y\in[t]^n$, 
satisfying $x_i=a$, and $y=\down_i(x)$.

Let us fix such a pair $x,y$ and set $C=\{z\in
K\setminus\down_{i,a}(K)\emid\Match(x_I,z_I)=M\}$. 
Observe that $N_x = N'_x \cup C$ and $N'_x\cap C=\emptyset$.
Similarly, observe that $N'_y = N_y \cup \down_{i,a}(C)$ and 
$N_y \cap \down_{i,a}(C)=\emptyset$.
Thus we have $|N'_x|=|N_x|-|C|$ and 
$|N'_y|=|N_y|+|\down_{i,a}(C)|=|N_y|+|C|$.

The inequality $f(|N_x|)+f(|N_y|)\ge f(|N'_x|)+f(|N'_y|)$ follows now from
the concavity of $f$, the inequalities $|N'_x|\le|N_y|\le|N'_y|$ and the equality $|N_x|+|N_y|=|N'_x|+|N'_y|$. Here the first inequality follows from
$\down_{i,a}(N'_x)\subseteq\down_{i,a}(N_y)$, the second inequality and the
equality comes from the observations of the previous paragraph.
\end{proof}

\begin{lemma}
\label{lem:find-prod}
Let $K\subseteq [t]^n$ be arbitrary. 
There exists a vector $x\in K$ having at least $n/5$ coordinates 
that are greater than $k\defeq\frac{t}{2}\mu(K)^{5/(4n)}$.
\end{lemma}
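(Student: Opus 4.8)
The plan is to argue by contradiction through an elementary covering-and-counting bound. Assume $K\neq\emptyset$ (otherwise the statement is vacuous) and suppose that no $x\in K$ has at least $n/5$ coordinates exceeding $k\defeq\frac t2\mu(K)^{5/(4n)}$; equivalently, every $x\in K$ has more than $4n/5$ coordinates that are at most $k$. We may also assume $k\ge 1$, since if $k<1$ then every coordinate of every vector in $[t]^n$ already exceeds $k$ and the claim holds trivially. For each $x\in K$ I would fix a set $L(x)\subseteq[n]$ of exactly $4n/5$ coordinates on which $x$ is bounded by $k$; then $K$ lies in the union, over all $(4n/5)$-subsets $L\subseteq[n]$, of the boxes $B_L=\{y\in[t]^n\mid y_i\le k\text{ for every }i\in L\}$. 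Since $|B_L|\le k^{4n/5}t^{n/5}$ and there are $\binom{n}{n/5}$ such $L$, this yields
$$|K|\ \le\ \binom{n}{n/5}\,k^{4n/5}\,t^{n/5}.$$

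Next I would substitute the value of $k$. The exponent $5/(4n)$ in the definition of $k$ is chosen precisely so that $\bigl(\mu(K)^{5/(4n)}\bigr)^{4n/5}=\mu(K)$, whence $k^{4n/5}t^{n/5}=2^{-4n/5}\,t^{n}\mu(K)=2^{-4n/5}|K|$ and the displayed inequality becomes $|K|\le\binom{n}{n/5}\,2^{-4n/5}|K|$. As $K$ is nonempty this would force $\binom{n}{n/5}\ge 2^{4n/5}$, contradicting the standard entropy bound $\binom{n}{n/5}\le 2^{n\BEnt(1/5)}$ together with the numerical fact $\BEnt(1/5)=\log 5-\tfrac85<\tfrac45$ (indeed $\log 5<2.322$, so $\BEnt(1/5)<0.722$). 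This contradiction completes the argument.

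The only delicate point is the matching of the two constants: the threshold $n/5$ on the number of large coordinates and the exponent $5/(4n)$ defining $k$ are tuned so that the entropy loss $\BEnt(1/5)$ from the union bound is strictly smaller than the factor $4/5$ gained from shrinking $4n/5$ coordinates down to the range $[k]$. The slack between $0.8$ and $\approx 0.722$ comfortably absorbs the rounding of $4n/5$ and $n/5$ to integers, which the paper's divisibility convention lets us ignore anyway. I do not expect any real obstacle here; the proof is a one-line counting estimate once the bookkeeping is set up.
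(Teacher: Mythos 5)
Your proof is correct and is essentially the paper's own argument: both bound the number of vectors with too few large coordinates by $\binom{n}{n/5}t^{n/5}k^{4n/5}$, substitute the definition of $k$ to get $|K|\binom{n}{n/5}2^{-4n/5}$, and conclude via the entropy estimate $\binom{n}{n/5}\le 2^{n\BEnt(1/5)}$ with $\BEnt(1/5)<4/5$. The only difference is cosmetic (contradiction versus direct counting).
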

\begin{proof}
The number of vectors that have at most $n/5$ coordinates greater
than $k$ can be upper bounded as 
\begin{align*}
{n\choose n/5} t^{n/5} k^{4n/5} 
= t^n {n\choose n/5} (k/t)^{4n/5} 
= |K|\frac{{n \choose n/5}}{2^{4n /5}},
\end{align*}
where in the last step we have substituted
$\frac{k}{t}=\frac{1}{2}\mu(K)^{5/(4n)}$ and $\mu(K) = |K| / t^n$.
Estimating ${n\choose n/5}\le 2^{n\BEnt(1/5)}$, 
we obtain that the above quantity is less than $|K|$.
Therefore, there must exists an $x\in K$ that has at least $n/5$ coordinates
greater than $k$.
\end{proof}

\begin{theorem}
\label{thm:isoperimetry}
Let $S$ be an arbitrary subset of  $[t]^n$.  Let $k=\frac{t}{2}\mu(S)^{5/(4n)}$ 
and $M = nk/(20t)$. There exists a subset $T\subset S$ of size $k^{n/5}$ 
and $I\subset [n]$ of size $n/5$ such that, defining 
$N_x=\{x'\in T\mid\Match(x_I,x'_I)\ge M\}$,
we have
\begin{enumerate}[(i)]
\item $\Pr_{x\sim\mu}[N_x=\emptyset] \le 5^{-M}$ and
\item $\E_{x\sim \mu}[\log|N_x|]\geq (n/5-M)\log k - n\log k /5^M$,
where we take $\log 0 = -1$ to make the above expectation exist.
\end{enumerate}
\end{theorem}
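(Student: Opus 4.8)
The plan is to combine the three preceding lemmas as follows. First, I would apply Lemma~\ref{lem:find-prod} to the set $K=S$: this produces a vector $x^*\in S$ with at least $n/5$ coordinates exceeding $k=\frac t2\mu(S)^{5/(4n)}$. Fix $I\subseteq[n]$, $|I|=n/5$, to be a set of such coordinates. By Lemma~\ref{lem:down}(iv) applied to $\down(S)$ (noting $x^*$ has all its $I$-coordinates large, hence the box $P=[x^*_1]\times\cdots\times[x^*_n]$ contains the sub-box $[k]^I$ on the coordinates in $I$), we can extract a subset $T\subseteq S$ with $|T|=k^{n/5}$ whose projection onto $I$ is a combinatorial box; more carefully, I want $T$ such that $\down_I(T)$ projected to $I$ equals $[k]^I$, so that $T_I$ behaves like $[k]^{n/5}$ under the down-shifting. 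This is the step where I would need to be slightly careful about whether to down-shift only the coordinates in $I$ or all coordinates, and about whether $|T|$ should be exactly $k^{n/5}$ or at least that — the statement says exactly, so I would fix an appropriate sub-box.

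Second, for the quantitative estimates I would pass from $T$ to $\down(T)$ using Lemma~\ref{lem:list} with the concave function $f(l)=\log l$ (with the convention $\log 0=-1$, which is still concave on the nonnegative integers) and the ball $B_{I,M}$. Lemma~\ref{lem:list} gives
\begin{align*}
\E_{x\sim\mu}[\log|B_{I,M}(x)\cap\down(T)|]\le\E_{x\sim\mu}[\log|B_{I,M}(x)\cap T|],
\end{align*}
so it suffices to lower-bound the left-hand side, and likewise for part~(i) applying Lemma~\ref{lem:list} with the concave counting function $f(0)=0$, $f(l)=1$ shows $\Pr_{x\sim\mu}[B_{I,M}(x)\cap T=\emptyset]\le\Pr_{x\sim\mu}[B_{I,M}(x)\cap\down(T)=\emptyset]$ — wait, the inequality goes the wrong way for a direct union-type bound, so instead I would use that $\down(T)$ contains the full box $[k]^I$ on the coordinates of $I$ (freezing the other coordinates), reducing both parts to a clean computation on $[t]^{n/5}$ versus its sub-box $[k]^{n/5}$.

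Third, the core computation: with $P=[k]^{n/5}$ sitting inside the coordinates $I$, for a uniform random $x\sim\mu$ the quantity $|B_{I,M}(x)\cap P|$ depends only on $x_I$, and $\Match(x_I,P)$ — the number of coordinates $i\in I$ with $x_i\le k$ — is a sum of $n/5$ independent Bernoulli$(k/t)$ variables, with mean $\tfrac{n/5\cdot k}{t}=nk/(5t)=4M$. Conditioned on the match-set being a particular $J\subseteq I$ of size $j\ge M$, we have $|B_{I,M}(x)\cap P|\ge k^{\,j-M}$ (choose $M$ of the matched coordinates to match and let the remaining $j-M$ range freely over $[k]$ — actually more carefully $|B_{I,M}(x)\cap P|$ equals the number of ways to pick a point of $[k]^I$ agreeing with $x$ in at least $M$ of the $j$ matched coordinates, which is at least $k^{j-M}$). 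Hence $\log|N_x|\ge(|\Match|-M)\log k$ whenever $|\Match|\ge M$, and $\log|N_x|\ge -1$ otherwise; taking expectations, $\E[\log|N_x|]\ge(\E[|\Match|]-M)\log k - \Pr[|\Match|<M]\cdot(\text{something}\le n\log k)$. Since $\E[|\Match|]=4M$, the main term is $(4M-M)\log k=3M\log k\ge(n/5-M)\log k$ once $M\le n/20$, which holds as $M=nk/(20t)\le n/20$. Wait — the claimed bound is $(n/5-M)\log k$, not $3M\log k$; so I would instead want $\E[|\Match|]\ge n/5-M+$ (error), which is false since $\E[|\Match|]=4M\ll n/5$. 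So the real argument must be: condition on $x_I\in[k]^I$ entirely (probability $(k/t)^{n/5}$ is tiny, so this can't be it either). I expect the correct route is that $|N_x|$ is large whenever \emph{many} coordinates match, and one uses a Chernoff lower tail: $|\Match|\ge 3M$ with probability $\ge 1-5^{-M}$ (since $3M$ is well below the mean $4M$... no, $3M<4M$, lower tail, Chernoff gives $\Pr[|\Match|<3M]\le\exp(-\Theta(M))$). Hmm, but then $\log|N_x|\ge(3M-M)\log k=2M\log k$, still not $(n/5-M)\log k$.

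The resolution I would pursue: $M=nk/(20t)$ and also $k=\tfrac t2\mu(S)^{5/(4n)}$, and in the application $\mu(S)$ is not tiny, so $k$ is comparable to $t$ and $M$ is comparable to $n/20$, i.e. $n/5-M$ and $2M$ are of the same order; more precisely, with $k\le t/2$ we get $M\le n/40$, so $n/5-M\ge 7n/40$, while the Chernoff-guaranteed match count is $\ge \E[|\Match|](1-\delta)$. The honest fix is that one should \emph{not} use the full box $[k]^I$ but rather observe that $\down(T)$, restricted to $x$ with $x_I$ in the appropriate range, contains a box of the right shape so that $|N_x|\ge k^{(n/5)-M}$ for a $1-5^{-M}$ fraction of $x$ — this would need $x_I$ to lie in $[k]^I$ with probability $\ge 1-5^{-M}$, forcing $k\ge t(1-5^{-\Theta(M)})$, which contradicts $k\le t/2$.

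So I now believe the intended argument is the Chernoff one and the bound $(n/5-M)\log k$ in part~(ii) must be read with the understanding that $n/5-M$ is what survives after accounting for both the $-M$ slack in the ball radius \emph{and} the deviation of $|\Match|$ below $n/5$ — i.e. the claim implicitly uses $\E[|\Match|]\ge n/5 - O(\text{something})$, which would require $k/t\ge 1-O(1/n)\cdot$stuff. Rather than resolve this tension in the sketch, here is the clean plan: \textbf{(a)} produce $I$ and $T\subseteq S$, $|T|=k^{n/5}$, with $\down(T)$ containing $[k]^I\times\{*\}$ via Lemmas~\ref{lem:find-prod} and~\ref{lem:down}(iv); \textbf{(b)} use Lemma~\ref{lem:list} with $f=\log$ (and with the counting function for part~(i)) to replace $T$ by $\down(T)$; \textbf{(c)} compute on the box: let $Y=|\{i\in I: x_i\le k\}|\sim\mathrm{Bin}(n/5,k/t)$, note $|N_x|\ge k^{(Y-M)^+}$ and $N_x=\emptyset$ only if $Y<M$; part~(i) is then a Chernoff lower-tail bound $\Pr[Y<M]\le 5^{-M}$ using $\E[Y]=nk/(5t)=4M$ and the standard bound $\Pr[\mathrm{Bin}(N,p)< \E/4]\le e^{-\Theta(\E)}\le 5^{-M}$; part~(ii) follows from $\E[\log|N_x|]\ge\E[(Y-M)^+]\log k - \Pr[Y<M]\ge(\E[Y]-M)\log k - n\log k\cdot 5^{-M}$, and here the paper must intend $\E[Y]-M$ to be compared against $n/5-M$ — which is valid precisely when $k\ge t/5$, i.e. when $\mu(S)$ is not too small, the regime in which the theorem is applied. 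The main obstacle, and the step I would write most carefully, is exactly this reconciliation of the Chernoff mean $4M=nk/(5t)$ with the target $n/5$; I would either add the hypothesis $k\ge t/5$ explicitly or absorb the discrepancy into the error term, and I would double-check the direction of the inequality in Lemma~\ref{lem:list} for the counting function in part~(i).
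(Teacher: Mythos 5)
Your skeleton matches the paper's: extract a large-coordinate vector via \autoref{lem:find-prod}, obtain $T\subseteq S$ with $\down(T)=P$ a box via \autoref{lem:down}(iv), transfer the estimate from $P$ back to $T$ via \autoref{lem:list} with a concave $f$, and finish with a binomial/Chernoff computation on the box. But the core counting step is wrong, and this error is what sends you into the long (and unnecessary) attempt to ``fix'' the theorem statement. You claim $|B_{I,M}(x)\cap P|\ge k^{\,h(x)-M}$, where $h(x)$ is the number of $i\in I$ with $x_i\in[k]$, and then observe that $\E[h(x)]=4M\ll n/5$ so the bound $(n/5-M)\log k$ seems unreachable. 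The correct count is $|B_{I,M}(x)\cap P|\ge k^{\,n/5-M}$ whenever $h(x)\ge M$: a point $y\in B_{I,M}(x)\cap P$ only needs to agree with $x$ on \emph{some} $M$ coordinates of $I$ and to lie in $P$ elsewhere. So choose $M$ indices $i\in I$ with $x_i\in[k]$ and set $y_i=x_i$ there; the remaining $n/5-M$ coordinates of $I$ are \emph{not} required to match $x$ and range freely over all of $[k]$. The exponent is therefore $n/5-M$, independent of $h(x)$, and the only role of $h(x)$ is the event $\{h(x)\ge M\}$, whose complement has probability at most $5^{-M}$ by Chernoff (mean $4M$). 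This yields $\E[\log|B_{I,M}(x)\cap P|]\ge\Pr[h\ge M](n/5-M)\log k-\Pr[h<M]\ge(n/5-M)\log k-n\log k/5^M$ with no additional hypothesis such as $k\ge t/5$; your entire ``reconciliation'' discussion should be deleted.

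Two smaller corrections. First, \autoref{lem:find-prod} must be applied to $\down(S)$ (which has the same cardinality as $S$ by \autoref{lem:down}(ii)), not to $S$ itself: you need the large-coordinate vector to lie in the ideal $\down(S)$ so that the box $P$ below it is contained in $\down(S)$, which is what lets \autoref{lem:down}(iv) produce $T\subseteq S$ with $\down(T)=P$. Second, your worry about the direction of \autoref{lem:list} in part (i) is unfounded: with $K=T$ the lemma reads $\E[f(|B_{I,M}(x)\cap P|)]\le\E[f(|B_{I,M}(x)\cap T|)]$, and taking $f$ to be (a shift of) the counting function this is exactly $\Pr[N_x=\emptyset]\le\Pr[B_{I,M}(x)\cap P=\emptyset]\le 5^{-M}$ --- the inequality you wrote down was already the right one.
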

\begin{proof}
By \autoref{lem:down}(ii), we have $|\down(S)|=|S|$. 
By \autoref{lem:find-prod}, there exists an $x\in\down(S)$ 
having at least $n/5$ coordinates that are greater than $k$.
Let $I\subset[n]$ be a set of $n/5$ coordinates such that 
$x_i\geq k$ for a fixed $x\in\down(S)$. By \autoref{lem:down}(iv),
$\down(S)$ is an ideal and thus it contains the set $P=\prod_iP_i$,
where $P_i=[k]$ for $i\in I$ and $P_i=\{1\}$ for $i\notin I$. 
Also by \autoref{lem:down}(iv), there exists a $T\subseteq S$ such that
$P = \down(T)$. We fix such a set $T$. Clearly, $|T|=k^{n/5}$.

For a vector $x\in[t]^n$, let $h(x)$ be the number of coordinates $i\in I$ 
such that $x_i\in [k]$. Note that $\E_{x\sim \mu}[h(x)] = 4M$ and $h(x)$ has a
binomial distribution. By the Chernoff bound we have $\Pr_{x\sim \mu}[h(x)<M]
< 5^{-M}$. For $x$ with $h(x)\ge M$ we have $|B_{I,M}(x)\cap P|\ge
k^{n/5-M}$, but for $h(x)<M$ we have $B_{I,M}(x)\cap P=\emptyset$. With the
unusual convention $\log0=-1$ we have
\begin{align*}
\E_{x\sim \mu} [\log|B_{I,M}(x)\cap P|]&\ge\Pr[h(x)\ge M](n/5-M)\log
k-\Pr[h(x)<M]\\
&>(n/5-M)\log k-n\log k/5^M
\end{align*}

We have $\down(T)=P$ and our unusual $\log$ is concave on the nonnegative
integers, so \autoref{lem:list} applies and proves statement (ii):
\begin{align*}
\E_{x\sim \mu}[\log |N_x|] &\ge\E_{x\sim \mu} [\log|B_{I,M}(x)\cap P|]\\
&\ge(n/5-M)\log k - n\log k /5^M.
\end{align*}

To show statement (i), we apply \autoref{lem:list} with the concave function
$f$ defined as $f(0)=-1$ and  $f(l)=0$ for all $l>0$. We obtain that
\begin{align*}
\Pr_{x\sim\mu}[N_x=\emptyset]
&=-\E_{x\sim\mu}[f(|N_x|)]\\
&\le-\E_{x\sim\mu}[f(|B_{I,M}(x)\cap P|)]\\
&=\Pr_{x\sim\mu}[B_{I,M}(x)\cap P=\emptyset]\\
&<5^{-M}.
\end{align*}
This completes the proof.
\end{proof}

\section{Lower bound for multiple round protocols}
\label{sec:lowerbound}



In this section we prove our main lower bound result:
\begin{theorem}
\label{thm:main}
For any $r\leq\log^*n$,
an $r$-round probabilistic protocol for $\EE_n$ with error probability at most
$1/3$ sends at least one message of size 
$\Omega(n\log^{(r)}n)$.
\end{theorem}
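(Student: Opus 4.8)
The strategy is a round elimination argument: starting from an $r$-round $1/3$-error protocol for $\EE_n$ with all messages of size $o(n\log^{(r)}n)$, I will produce an $(r-1)$-round protocol for $\EE_{n'}$ on a smaller instance $n'$ with a comparable error and messages still too short relative to $\log^{(r-1)}n'$, and iterate down to a $0$-round protocol that would have to decide a nontrivial $\EE$ instance with no communication — a contradiction. First I would perform the standard reductions already used in \autoref{sec:elementary}: amplify to tiny error $\epsilon$, fix the random coins to get a deterministic protocol $Q$ erring on an $\epsilon$-fraction of the uniform distribution, and then apply the analogue of \autoref{lem:determine-s} to pin down the first message: there is a large set $S\subseteq[t]^n$, $\mu(S)=2^{-c-1}$ where $c$ is the first message length, on which Alice's first message is constant and the conditional error over $y\sim\mu$ is at most $2\epsilon$ for every $x\in S$.

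The crux is to replace this large set $S$ by a combinatorial box (a product set) on a subset of coordinates, so that the remaining $(r-1)$-round protocol, run inside the box, is genuinely a fresh $\EE$ instance in fewer dimensions, with a fresh (roughly uniform) input distribution on Bob's side. This is exactly where \autoref{thm:isoperimetry} enters: applied to $S$, it yields $I\subseteq[n]$ with $|I|=n/5$, a subset $T\subseteq S$ of size $k^{n/5}$ with $k\approx \tfrac t2\mu(S)^{5/(4n)}$, and the key guarantee that for $x\sim\mu$ the ``neighbourhood in $T$'' $N_x=\{x'\in T:\Match(x_I,x'_I)\ge M\}$ is nonempty with probability $\ge 1-5^{-M}$ and, crucially, $\E_{x\sim\mu}[\log|N_x|]$ is almost $(n/5)\log k$. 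The point of the logarithmic isoperimetric measure is that $\log|N_x|$ controls the entropy that Bob effectively has available: for a typical Bob input $y$, conditioning on $\EE(x_i,y_i)=0$ for $i\notin I$ and on $x\in N_y$-type events still leaves $x_I$ spread over $|N_y|$ possibilities, and $\E[\log|N_y|]\approx (n/5)\log k$ means the restricted instance on coordinates $I$ behaves like $\EE_{n/5}$ over an alphabet of effective size $\approx k$. Since $k$ shrinks from $t=4n$ to roughly $\tfrac t2\,2^{-5c/(4n)}\approx n\cdot 2^{-\Theta(c/n)}$, a short first message $c=o(n\log^{(r)}n)$ forces the new alphabet to stay of size $\Omega(n')$ where $n'=n/5$; one then checks $t'=\Omega(n')$ is preserved so the inductive hypothesis applies. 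I would set up the bookkeeping so that after eliminating one round the instance size goes $n\mapsto n/5$, the error grows by a bounded additive amount (absorbing the $5^{-M}$ and $2\epsilon$ losses, which is why we need $r\le\log^*n$ and the Chernoff tail $M=nk/(20t)=\Theta(n)$ to be large), and the message-length budget goes $\log^{(r)}n\mapsto \log^{(r-1)}(n/5)$ — the one logarithm we ``spend'' per round being precisely the $c\approx n\log^{(r)}n$ we are trying to rule out.

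Unwinding the recursion $r$ times, I arrive at a $0$-round (hence output-only) deterministic procedure that must correctly decide $\EE_{n^{(r)}}$ with $t^{(r)}=\Omega(n^{(r)})$ on a constant fraction of inputs, where $n^{(r)}=\Theta(n/5^r)$ is still $\ge 2$ since $r\le\log^*n$ and the alphabet has stayed linear; a $0$-round protocol outputs a constant, but on a balanced alphabet both $\EE=0$ and $\EE=1$ occur with constant probability, so the error is $\Omega(1)\gg\epsilon$, the contradiction. Rescaling constants then gives the stated $\Omega(n\log^{(r)}n)$ lower bound on the largest message. The main obstacle — and the technically heaviest part — is the middle step: arguing that after fixing Alice's message and replacing $S$ by the box $T$ on coordinates $I$, the surviving $(r-1)$-round protocol really is a legitimate protocol for $\EE_{n/5}$ with a close-to-uniform input for Bob and only an additive error increase. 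This requires carefully choosing how to embed $\EE_{n/5}$-instances into the conditioned distribution (using $x\in N_y$ to couple Bob's coordinates to the box), verifying that the $\log|N_x|$ bound transfers to a bound on the effective alphabet size in expectation rather than worst case (so one loses only a $1/5^M$-type term, as in \autoref{thm:isoperimetry}(ii)), and handling the asymmetry that only Alice's message has been fixed — so one must also symmetrise or alternate the roles of Alice and Bob across successive rounds. Getting all the parameters ($\epsilon$, $M$, the alphabet shrinkage, the per-round error increment) to line up so the induction closes for all $r\le\log^*n$ is the delicate accounting that the full proof in this section must carry out.
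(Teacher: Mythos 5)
Your high-level strategy (amplify and fix coins, pin Alice's first message via \autoref{lem:determine-s}, invoke \autoref{thm:isoperimetry}, eliminate a round, recurse to a $0$-round contradiction) is the paper's strategy. But two quantitative pieces of your plan are wrong in ways that would prevent the induction from closing. First, the recursion on the instance size: you claim the reduced problem is $\EE_{n/5}$ over an alphabet of effective size $k\approx \frac t2 2^{-5c/(4n)}$, and that $t'=\Omega(n')$ is "preserved" with $n'=n/5$. It is not: with $c=\Theta(n\log^{(r)}n)$ one gets $k= t/(\log^{(r-1)}n)^{\Theta(1)}$, which is far smaller than $n/5$, so your reduced instance is badly unbalanced ($t'\ll n'$); the authors explicitly note that the unbalanced regime $t<n$ needs a different distribution and a more refined isoperimetric inequality. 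The paper instead sets the new dimension to $n'=M/10=\Theta(nk/t)=\Theta(n\,2^{-5c/(4n)})$ — i.e., $n$ itself shrinks by the factor $2^{5c/(4n)}=(\log^{(r-1)}n)^{5\gamma/4}$ — and embeds a \emph{balanced} instance ($t'=4n'$) into the coordinates $I$ by repeating $(u,v)$ $m$ times, applying fresh random maps $[t']\to[t]$ coordinatewise, and then resampling Alice's vector from $N_X$ to force it into $T\subseteq S$. That shrinkage factor is precisely what converts the budget $n\log^{(r)}n$ into $n'\log^{(r-1)}n'$ (the computation at the end of \autoref{thm:main2}); your constant-factor $n\mapsto n/5$ recursion does not produce the drop of one iterated logarithm and your "$t'=\Omega(n')$" check fails.

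Second, the error bookkeeping. You propose to let the error grow by "a bounded additive amount" each round. The per-round losses include constants (the $\approx 0.2$ probability that the embedding flips $\EE$, the $1/10$ Markov loss on the entropy deficit of $\dist(Y\mid X'=x')$), so after even a few of the up-to-$\log^*n$ eliminations the accumulated error would exceed $1/3$ and the $0$-round contradiction (\autoref{lem:terminal}, error $\ge 0.22$) would evaporate. The paper avoids this by re-amplifying \emph{inside} each elimination step: it runs $k_0R_0=O(1)$ independent copies of the embedding and of $Q$, takes a majority over $k_0$ copies for each possible match count $R\in[R_0]$ (the parameter $R$ is needed because the embedding's guarantees are only for inputs with $\Match(u,v)=0$ or exactly $R$), and ORs the results; this keeps the distributional error fixed at $\epsilon_0$ at every level while only multiplying the message length by a constant. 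Both of these — the correct shrinkage $n'=\Theta(n2^{-5c/(4n)})$ with a balanced alphabet, and the internal amplification over the match-count parameter — are essential ideas missing from your plan, and the step you flag as "the technically heaviest part" (transferring the $\E[\log|N_x|]$ bound to a statement that $\dist(Y\mid X')$ is near-uniform, via \autoref{lem:y-uniform} and \autoref{lem:kl-err}) is indeed where the real work lies and is not resolved by what you have written.
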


Note that the $r=1$ round case of this theorem was proved as
\autoref{thm:singleround} in \autoref{sec:elementary}. 
The other extreme, which immediately follows from \autoref{thm:main}, is the following.

\begin{corollary}
Any probabilistic protocol for $\EE_n$ with maximum message size $O(n)$ and
error $1/3$ has at least $\log^* n - O(1)$ rounds.
\end{corollary}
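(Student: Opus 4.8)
The plan is to deduce this directly from \autoref{thm:main}; the paper already asserts it follows ``immediately,'' so the task is just to spell out the iterated-logarithm bookkeeping. Suppose $P$ is an $r$-round probabilistic protocol for $\EE_n$ with error at most $1/3$ and with every message of size at most $Cn$ for some absolute constant $C$. If $r>\log^*n$ the conclusion is already trivially true, so assume $r\le\log^*n$. Then \autoref{thm:main} applies and guarantees that $P$ sends a message of size at least $c\,n\log^{(r)}n$, where $c>0$ is the absolute constant hidden in the $\Omega$. Comparing with the upper bound $Cn$ on message size and dividing by $cn$ (note $\log^{(r)}n\ge 1>0$ for $r\le\log^*n$, by definition of $\log^*$), we get $\log^{(r)}n\le C/c=:D$, a constant independent of $n$.

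It remains to convert $\log^{(r)}n\le D$ into $r\ge\log^*n-O(1)$. If $\log^{(r)}n<2$, then by the very definition of $\log^*$ we already have $\log^*n\le r$. Otherwise $2\le\log^{(r)}n\le D$, so in particular $D\ge 2$. Each further application of $\log$ in the iteration $\log^{(r)}n,\log^{(r+1)}n,\dots$ acts on a value that is at least $2$ (we stop the iteration the moment the value drops below $2$), and $\log$ is increasing on $[1,\infty)$, so $\log^{(j)}$ is monotone increasing on the relevant range. Hence $\log^{(r+j)}n=\log^{(j)}\!\bigl(\log^{(r)}n\bigr)\le\log^{(j)}(D)$ for every $j$ for which both sides are defined. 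Taking $j=\log^*D$, a constant depending only on $D$, gives $\log^{(r+\log^*D)}n\le\log^{(\log^*D)}(D)<2$, so $\log^*n\le r+\log^*D$. In both cases $r\ge\log^*n-\log^*D=\log^*n-O(1)$, which is the claim.

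There is essentially no obstacle here beyond bookkeeping. The only points needing a little care are: keeping the two regimes $r>\log^*n$ and $r\le\log^*n$ separate at the outset, since \autoref{thm:main} is only stated for $r\le\log^*n$ (the first regime makes the statement vacuously true); and justifying the monotonicity of $\log^{(j)}$ used in the last step, which is fine because the iteration only ever feeds $\log$ arguments that are at least $2$. The constant $O(1)$ in the conclusion is $\log^*(C/c)$, where $C$ comes from the $O(n)$ message-size hypothesis and $c$ from the $\Omega$ in \autoref{thm:main}.
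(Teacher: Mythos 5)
Your argument is correct and is exactly the route the paper intends: the paper gives no explicit proof, stating only that the corollary follows immediately from \autoref{thm:main}, and your write-up supplies the missing bookkeeping (splitting off the vacuous case $r>\log^*n$, extracting $\log^{(r)}n\le C/c$ from the message-size bound, and using the identity $\log^*n=r+\log^*(\log^{(r)}n)$ when $\log^{(r)}n\ge2$). No gaps.
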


\autoref{thm:main} is a direct consequence of the corresponding statement on 
deterministic protocols with small distributional error on uniform distribution; 
see \autoref{thm:main2} at the end of this section. 
Indeed, we can decrease the error of a randomized protocol below any 
constant $\epsilon>0$ for the price of increasing the message length by a 
constant factor, then we can fix the coins of this low error protocol in a 
way that makes the resulting
deterministic protocol $Q$ err in at most $\epsilon$ fraction of the possible
inputs. Applying \autoref{thm:main2} to the protocol $Q$ proves
\autoref{thm:main}.

In the rest of this section we use round-elimination to prove
\autoref{thm:main2}, that is, we will use $Q$ to solve smaller instances of the 
exists-equal problem in a way that the first message
is always the same, and hence can be  eliminated.

Suppose Alice sends the first message of $c$ bits in $Q$. By \autoref{lem:determine-s}, 
there exists a $S\subset [t]^n$ of size $\mu(S)=2^{-c-1}$ such that 
 the first message  of Alice is fixed when $x\in S$ and we have
$\Pr_{y\sim \mu}[Q(x,y)\neq\EE(x,y)]\leq 2\epsilon$ for all $x\in S$.
Fix such a set $S$ and let $k\defeq t/2^{\frac{5(c+1)}{4n} + 1}$ and $M \defeq nk/(20t)$.
By \autoref{thm:isoperimetry}, there exists a $T\subset S$ of size 
$k^{n/5}$ and $I\subset[n]$ of size $n/5$ such that defining
\begin{align*}
N_x=\{y\in T\mid\Match(x_I,y_I)\ge M\}
\end{align*}
we have
$\Pr_{x\sim\mu}[N_x=\emptyset] \le 5^{-M}$ and
$\E_{x\sim \mu}[\log|N_x|]\geq (n/5-M)\log k - n\log k /5^M$.
Let us fix such sets $T$ and $I$. 
Note also that \autoref{thm:isoperimetry} guarantees that $T$ is 
a strict subset of $S$.  Designate an arbitrary element of $S\setminus T$ 
as $x'_e$.

\subsection{Embedding the smaller problem}
\label{sec:embed}
The players embed a smaller
instance $u,v\in[t']^{n'}$ of the exists-equal problem in $\EE_n$
concentrating on the 
coordinates $I$ determined above.
We set $n'\defeq M/10$ and
$t'\defeq4n'$. Optimally, the same embedding should guarantee low error
probability for all pairs of inputs, but for technical reasons we need to know
the number of coordinate agreements $\Match(u,v)$ for the input pairs $(u,v)$
in the smaller problem having $\EE_{n'}(u,v)=1$. Let $R\ge1$ be this number, so we
are interested in inputs $u,v\in[t']^{n'}$ with $\Match(u,v)=0$ or $R$. We
need this extra parameter so that we can eliminate a non-constant number of
rounds and still keep the error bound a constant. For results on
constant round protocols one can concentrate on the $R=1$ case.

In order to solve the exist-equal problem
with parameters $t'$ and $n'$ Alice and Bob use the joint random source to
turn their input $u,v\in[t']^{n'}$ into longer random vectors $X',Y\in[t]^n$,
respectively, and apply the protocol $Q$ above to solve this exists-equal
problem for these larger inputs. Here we informally list the main requirements
on the process generating $X'$ and $Y$. We require these properties for the
random vectors $X',Y\in[t]^n$ generated from a fixed pair $u,v\in[t']^{n'}$
satisfying $\Match(u,v)=0$ or $R$.

\begin{enumerate}[(P1)]\item $\EE(X',Y)=\EE(u,v)$ with large probability,
\label{prop:2}
\item $\supp(X')=T\cup \{x'_e\}$ and
\label{prop:3}
\item for most $x'\sim X'$, we have $\dist(Y\emid X'=x')$ is 
close to uniform distribution on $[t]^n$.
\label{prop:4}
\end{enumerate}

Combining these properties with the fact that 
$\Pr_{y\sim \mu}[Q(x,y)\neq\EE(x,y)]\leq 2\epsilon$ for each $x\in S$, 
we will argue that for the considered pairs of inputs $Q(X',Y)$ equals
$\EE(u,v)$ with large probability, thus the combined protocol solves the small
exists-equal instance with small error, at least for input pairs with
$\Match(u,v)=0$ or $R$. Furthermore, by \propref{prop:3} the first
message of Alice will be fixed and hence does not need to be sent, making the
combined protocol one round shorter.

The random variables $X'$ and $Y$ are constructed as follows. Let $m\defeq
2n/(MR)$ be an integer. Each player repeats his or her input ($u$ and $v$,
respectively) $m$ times, obtaining a vector of size $n/(5R)$. 
Then using the shared randomness, 
the players pick $n/(5R)$ uniform random maps
$m_i:[t']\to[t]$ independently  and apply $m_i$ to $i$\/th
coordinate. Furthermore, the players pick a uniform random 1-1 mapping
$\pi:[n/(5R)]\to I$ and use it to embed the
coordinates of the vectors they constructed among the coordinates of the
vectors $X$ and $Y$ of length $n$. The remaining $n-n/(5R)$ coordinates of $X$
is picked uniformly at random by Alice and similarly, the remaining $n-n/(5R)$
coordinates of $Y$ is picked uniformly at random by Bob. Note that the
marginal distribution of both $X$ and $Y$ are uniform on $[t]^n$. If
$\Match(u,v)=0$ the vectors $X$ and $Y$ are independent, while if
$\Match(u,v)=R$, then $Y$ can be obtained by selecting a random subset of $I$
of cardinality $mR$, copying the corresponding coordinates of $X$ and filling
the rest of $Y$ uniformly at random.

This completes the description 
of the random process for Bob. However Alice generates one more 
random variable $X'$ as follows. 
Recall that $N_x=\{z\in T\mid\Match(z_I,x_I)\ge M\}$.
The random variable $X'$ is obtained by drawing $x\sim X$ first and 
then choosing a uniform random element of $N_x$.
In the (unlikely) case that $N_x=\emptyset$, Alice chooses $X'=x'_e$.

Note that $X'$ either equals $x'_e$ or takes values from $T$, hence
\propref{prop:3} holds.
In the next lemma we quantify and prove \propref{prop:2} as well.

\begin{lemma}
\label{lem:error}
Assume $n\ge3$, $M\ge2$ and $u,v\in[t']^{n'}$. We have
\begin{enumerate}[(i)]
\item 
if $\Match(u,v)=0$ then $\Pr[\EE(X',Y)=0] > 0.77$;
\item 
if $\Match(u,v)=R$, then
$\Pr[\EE(X', Y)=1] \ge 0.80$.
\end{enumerate}
\end{lemma}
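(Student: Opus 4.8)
\textbf{Proof proposal for \autoref{lem:error}.}

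The plan is to analyze the two cases directly by isolating the contribution of the coordinates in $I$ that are ``controlled'' by the embedded instance. First consider the case $\Match(u,v)=0$. Here $X$ and $Y$ are independent uniform vectors on $[t]^n$, so $\EE(X,Y)=1$ only if they happen to agree on some coordinate; the expected number of such agreements is $n/t = 1/4$, so $\Pr[\EE(X,Y)=1]\le 1/4$ by Markov. We must then pass from $X$ to $X'$. Since $X'$ agrees with $X$ on the coordinates in $I$ up to the reshuffling inside $N_x$ — more precisely $X'\in N_X$ means $\Match(X'_I,X_I)\ge M$, so $X'$ differs from $X$ on at most $|I|-M = n/5 - M$ coordinates of $I$ (and is arbitrary on those), plus possibly $X'=x'_e$ when $N_X=\emptyset$. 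I would bound $\Pr[\EE(X',Y)=1]$ by $\Pr[\EE(X,Y)=1] + \Pr[N_X=\emptyset] + \Pr[X' \text{ and } X \text{ differ on a coordinate where } X'=Y]$. The middle term is at most $5^{-M}\le 5^{-2}$ by \autoref{thm:isoperimetry}(i). For the last term, conditioned on $X$ and on the choice of $X'\in N_X$, the vector $Y$ restricted to the at most $n/5-M$ coordinates where $X'$ and $X$ disagree is still uniform and independent of $X'$ there (Bob never sees $u$'s effect on those coordinates in a correlated way when $\Match(u,v)=0$), so this contributes at most $(n/5)/t = 1/20$. Summing, $\Pr[\EE(X',Y)=1]\le 1/4 + 1/25 + 1/20 < 0.23$, giving $\Pr[\EE(X',Y)=0] > 0.77$.

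For the case $\Match(u,v)=R$: the $R$ agreeing coordinates of $u,v$, after repetition $m$ times, become $mR$ coordinates on which the pre-embedding vectors agree; applying the random maps $m_i$ preserves these agreements, and the embedding $\pi$ places them into $I$. So there are $mR = 2n/M$ coordinates in $I$ on which $X$ and $Y$ agree by construction. Recall $N_X$ consists of elements of $T$ agreeing with $X$ on at least $M$ of the coordinates in $I$; I want to argue that with good probability $X'$ still agrees with $Y$ on at least one of the $mR$ planted coordinates. The key point: $X'$ differs from $X$ on at most $n/5 - M$ of the $n/5$ coordinates of $I$, i.e.\ $X'$ agrees with $X$ (hence with $Y$) on at least $M$ coordinates of $I$, and among these the planted agreements are ``random'' — the set of planted positions is a uniform random $mR$-subset of (the image of $[n/(5R)]$ under $\pi$, which is) $I$, chosen independently of which $M$-subset $N_X$ forces $X'$ to agree on. Quantitatively, conditioning on $N_X\ne\emptyset$ and on $X$, the probability that none of the $\ge M$ coordinates of $I$ on which $X'$ agrees with $X$ is a planted coordinate is at most $\binom{n/5 - M}{mR}/\binom{n/5}{mR}\le (1 - M/(n/5))^{mR} = (1-5M/n)^{2n/M}\le e^{-10}$, which is tiny. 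Adding the failure probability $\Pr[N_X=\emptyset]\le 5^{-M}\le 1/25$ and noting that when $X'=x'_e$ we lose control, we get $\Pr[\EE(X',Y)=1]\ge 1 - 1/25 - e^{-10} > 0.80$.

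The main obstacle I anticipate is making the independence claims precise: in the $\Match(u,v)=R$ case, the set $N_X$ depends on $X_I$, and the planted agreement positions are a subset of $I$, so one must check carefully that the random choice of \emph{which} positions are planted is independent of the event defining $N_X$ (it is, because the planting is done by $\pi$ and the $m_i$'s uniformly at random, symmetrically over $I$, while $N_X$ is a function of the \emph{values} $X_I$ not of the labelling of coordinates — a symmetry/exchangeability argument). Similarly in the $\Match(u,v)=0$ case one must verify that $Y$ stays uniform and independent of $X'$ on the ``slippage'' coordinates where $X'\ne X$; this follows because $Y_I$ is uniform and independent of $X$ entirely in this case, and $X'$ is a function of $X$ alone. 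I would handle both by first fixing all of Bob's randomness except the final uniform fill, or equivalently by conditioning on $X$ and exploiting that, given $X$, both the residual randomness in $Y$ and the choice of $X'\in N_X$ are independent. The numerical constants ($0.77$, $0.80$) leave comfortable slack, so the crude bounds above suffice; I would not optimize them.
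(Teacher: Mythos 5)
Your part (ii) is correct and is essentially the paper's argument: the paper sets $J=\{i\in I: X_i=Y_i\}$, $K=\{i\in I: X'_i=X_i\}$, notes $|J|\ge mR$ and $|K|\ge M$ when $N_X\ne\emptyset$, invokes exactly the exchangeability you describe (conditioned on $X$ and $X'$, the planted positions form a uniform $mR$-subset of $I$), and bounds $\Pr[J\cap K=\emptyset\mid N_X\ne\emptyset]<e^{-10}$, then adds $\Pr[N_X=\emptyset]\le 5^{-M}$. Your hypergeometric computation is the same one.

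Part (i), however, has a genuine gap. First, the arithmetic is simply wrong: $1/4+1/25+1/20=0.34$, not $<0.23$, so your union bound only yields $\Pr[\EE(X',Y)=0]\ge 0.66$, short of the required $0.77$. Second, even the $1/20$ term is an undercount: membership in $N_X$ only constrains $X'$ on the coordinates in $I$ (via $\Match(X'_I,X_I)\ge M$), so $X'$ may disagree with $X$ on all $4n/5$ coordinates outside $I$ as well, and the ``slippage'' term is really about $n/t=1/4$, pushing your bound to roughly $0.53$. The decomposition is inherently too lossy to reach $0.77$. The fix is the observation you relegate to your final paragraph and then fail to use: when $\Match(u,v)=0$, the vector $Y$ is uniform on $[t]^n$ and independent of $X$, and $X'$ is a function of $X$ and auxiliary randomness only, hence $X'$ is independent of $Y$ in its entirety. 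Therefore no comparison with $X$ is needed at all: for every fixed value $x'$ of $X'$ (including $x'_e$), $\Pr[\EE(x',Y)=0]=(1-1/t)^n=(1-1/(4n))^n$, which exceeds $0.77$ for $n\ge3$. This is the paper's one-line proof of part (i).
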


\begin{proof}
For the first claim, note that when 
$\Match(u,v) = 0$, the random variables $X$ and $Y$ are independent and 
uniformly distributed. We construct $X'$ based on $X$, so its value
is also independent of $Y$. Hence $\Pr[\EE(X',Y)=0]=(1-1/t)^n$.
This quantity goes to $e^{-1/4}$ since $t=4n$ and is larger than $0.77$
when $n\geq 3$. This establishes the first claim.

For the second claim let $J = \{i\in I\mid X_i=Y_i\}$ and $K=\{i\in I\mid X'_i = X_i\}$.
By construction, $|J|=\Match(X_{I},Y_{I})\ge mR$ 
and $|K|=\Match(X'_{I}, X_I) \geq M$ unless $N_X=\emptyset$.
By our construction, each $J\subset I$ 
of the same size is equally likely by symmetry, even when we condition on a
fix value of $X$ and $X'$. Thus we have $\E[|J\cap
K|\emid N_X\ne\emptyset]\ge mRM/|I|=10$ and $\Pr[J\cap K=\emptyset\emid
N_X\ne\emptyset]<e^{-10}$. Note that $X$ is distributed
uniformly over $[t]^n$, therefore by Theorem~\ref{thm:isoperimetry}(i) the
probability that $N_X=\emptyset$ is at most $5^{-M}$. 
Note that $\Match(X',Y)\ge|J\cap K|$ and thus $\Pr[\EE(X',Y)=0]\le\Pr[J\cap
K=\emptyset]\le\Pr[J\cap K=\emptyset\emid
N_X\ne\emptyset] +\Pr[N_X=\emptyset]\le e^{-10}+5^{-M}$. This completes the proof.
\end{proof}

We measure ``closeness to uniformity'' in \propref{prop:4} by simply
calculating the entropy. This entropy argument is postponed to
the next subsection; here we show how such a bound to the entropy implies that
the error introduced by $Q$ is small.

\begin{lemma}
\label{lem:kl-err}
Let $x'\in S$ be fixed and let $\gamma$ be a probability in the range
$2\epsilon\le\gamma<1$. If $\Ent(Y\emid X'=x')\geq n\log t - 
\BD(\gamma \dmid 2\epsilon)$ then $\Pr_{y\sim Y|X'=x'}[Q(x',y)
\neq \EE(x',y)]\leq \gamma$.
\end{lemma}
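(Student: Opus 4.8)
The statement to prove is Lemma~\ref{lem:kl-err}: if $\Ent(Y \emid X'=x') \ge n\log t - \BD(\gamma \dmid 2\epsilon)$, then the protocol $Q$ errs on input $x'$ against $y \sim (Y\emid X'=x')$ with probability at most $\gamma$. My plan is a standard ``entropy deficiency controls the divergence from uniform, which controls the change in error probability'' argument. Write $\nu = \dist(Y\emid X'=x')$, a distribution on $[t]^n$, and recall $\mu$ is uniform on $[t]^n$. The first key step is the identity relating entropy deficiency to KL-divergence from uniform: since $\mu$ is uniform, $\D(\nu \dmid \mu) = n\log t - \Ent(Y\emid X'=x')$. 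Hence the hypothesis is exactly $\D(\nu \dmid \mu) \le \BD(\gamma \dmid 2\epsilon)$.

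**The core inequality.** Next, let $E = \{y \in [t]^n : Q(x',y) \ne \EE(x',y)\}$ be the (fixed, since $x'$ is fixed) error set of $Q$ on $x'$. By the choice of $S$ in Lemma~\ref{lem:determine-s} (and $x' \in S$), we know $\mu(E) = \Pr_{y\sim\mu}[Q(x',y)\ne\EE(x',y)] \le 2\epsilon$. We want to bound $\nu(E) = \Pr_{y\sim Y|X'=x'}[Q(x',y)\ne\EE(x',y)]$. The tool is the data-processing inequality for KL-divergence applied to the indicator of $E$: projecting both $\nu$ and $\mu$ through the map $y \mapsto \mathbf{1}[y\in E]$ gives two Bernoulli distributions, with parameters $\nu(E)$ and $\mu(E)$ respectively, and data processing yields
\[
\BD(\nu(E) \dmid \mu(E)) \le \D(\nu \dmid \mu) \le \BD(\gamma \dmid 2\epsilon).
\]
Now I would argue that this forces $\nu(E) \le \gamma$. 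This uses two monotonicity facts about the Bernoulli divergence $\BD(p \dmid q)$: for fixed $q$, the function $p \mapsto \BD(p\dmid q)$ is decreasing on $[0,q]$ and increasing on $[q,1]$; and for fixed $p \ge q$, it is decreasing in $q$ on $(0,p]$. Since $\mu(E) \le 2\epsilon \le \gamma$, we have $\BD(\gamma \dmid \mu(E)) \ge \BD(\gamma \dmid 2\epsilon) \ge \BD(\nu(E)\dmid\mu(E))$; if $\nu(E) > \gamma$ then $\nu(E) > \gamma \ge 2\epsilon \ge \mu(E)$, so both $\gamma$ and $\nu(E)$ lie in the increasing branch above $\mu(E)$, and $\BD(\nu(E)\dmid\mu(E)) > \BD(\gamma\dmid\mu(E)) \ge \BD(\gamma\dmid 2\epsilon)$, contradicting the displayed chain. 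Hence $\nu(E) \le \gamma$, which is exactly the claim.

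**Main obstacle and loose ends.** The conceptual content is light; the main thing to be careful about is the direction and the branch of the Bernoulli divergence — that is, making sure that the inequality $\BD(\nu(E)\dmid\mu(E)) \le \BD(\gamma\dmid 2\epsilon)$ together with $\mu(E)\le 2\epsilon\le\gamma$ really pins $\nu(E)$ down from above rather than only giving a two-sided bound. The resolution, as sketched, is that we only need the upper tail: we are not trying to show $\nu(E)$ is close to $\mu(E)$, merely that it does not exceed $\gamma$, and for that the monotonicity of $\BD(\cdot\dmid q)$ on $[q,1]$ plus monotonicity in the second argument suffices. A secondary minor point is justifying $\D(\nu\dmid\mu) = n\log t - \Ent(\nu)$: this is immediate from $\D(\nu\dmid\mu) = \sum_y \nu(y)\log\frac{\nu(y)}{1/t^n} = \log(t^n) - \Ent(\nu)$, valid because $\mu$ has full support $t^n = |[t]^n|$ so $\nu \ll \mu$ automatically. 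One should also note the edge cases are harmless: if $\mu(E) = 0$ the divergence chain degenerates but then data processing already forces $\nu(E)=0 \le \gamma$; and the hypothesis $2\epsilon \le \gamma < 1$ keeps all Bernoulli parameters in a range where $\BD$ is finite and the monotonicity statements apply.
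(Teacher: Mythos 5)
Your proposal is correct and takes essentially the same route as the paper's own proof: the identity $\D(\nu\dmid\mu)=n\log t-\Ent(\nu)$ for the uniform reference measure, the data-processing (chain rule) inequality for Kullback--Leibler divergence applied to the indicator of the error event, and monotonicity of the binary divergence to conclude. The paper states this as a contrapositive and leaves the final monotonicity step implicit, whereas you spell it out; the content is identical.
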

\begin{proof}
For a distribution $\nu$ over $[t]^n$, let 
$e(\nu) = \Pr_{y\sim \nu}[Q(x', y)\neq\EE(x', y)]$. 
We prove the contrapositive of the
statement of the lemma, that is assuming $\Pr_{y\sim Y|X'=x'}[Q(x',y) 
\neq \EE(x',y)]>\gamma$ we prove $\Ent(Y\emid X'=x')< n\log t - 
\BD(\gamma \dmid 2\epsilon)$:
\begin{align*}
n\log t -\Ent(Y\emid X' = x') &= \D(\dist(Y\emid X'=x')\dmid \mu)\\
&\geq \BD(e(\dist(Y\emid X'=x'))\dmid e(\mu))\\
&\geq\BD(\gamma\dmid 2\epsilon),
\end{align*}
where the first inequality follows from the chain rule for the 
Kullback-Leibler divergence.
\end{proof}



\subsection{Establishing \propref{prop:4}}

We quantify \propref{prop:4} using the conditional entropy
$\Ent(Y\emid X')$. If $\Match(u,v)=R$ our process generetas $X$ and $Y$ with
the expected number $\E[\Match(X_I,Y_I)]$ of matches only slightly more than the
minimum $mR$. We lose most of these matches with $Y$ when we replace $X$ by
$X'$ and only an expected constant number remains. A constant number of forced
matches with $X'$ within $I$ restricts the number of possible vectors $Y$ but
it only decreases the entropy by $O(1)$. The calculations in
this subsection make this intuitive argument precise.

\begin{lemma}
\label{lem:y-uniform}
Let $X',Y$ be as constructed above. The following hold.
\begin{enumerate}[(i)]
\item If $\Match(u,v)=0$ we have
$\Ent(Y\emid X')=n\log t.$
\item If $M>100\log n$ and $\Match(u,v)=R$ we have
$\Ent(Y\emid X') = n\log t - O(1).$
\end{enumerate}
\end{lemma}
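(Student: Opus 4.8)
The plan is to compute the conditional entropy $\Ent(Y\emid X')$ by bounding it from above by the trivial $n\log t$ and from below by the same quantity minus a controlled loss. Statement (i) is immediate: when $\Match(u,v)=0$ the variables $X$ and $Y$ are independent and $X'$ is a (randomized) function of $X$ alone, hence independent of $Y$; since $Y$ is uniform on $[t]^n$ by construction, $\Ent(Y\emid X')=\Ent(Y)=n\log t$. For statement (ii) the upper bound $\Ent(Y\emid X')\le n\log t$ is trivial, so the entire content is the matching lower bound $\Ent(Y\emid X')\ge n\log t - O(1)$.

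For the lower bound I would proceed as follows. First expand $\Ent(Y\emid X')$ using the chain rule relative to the larger vector $X$: since $X'$ is obtained from $X$ by picking a uniform element of $N_X$ (or $x'_e$ if $N_X=\emptyset$), and $Y$ is generated jointly with $X$, write $\Ent(Y\emid X') \ge \Ent(Y\emid X',X) = \Ent(Y\emid X)$ — wait, this goes the wrong way, since conditioning on more can only decrease entropy, giving $\Ent(Y\emid X')\ge\Ent(Y\emid X,X')=\Ent(Y\emid X)$, and $\Ent(Y\emid X)=n\log t - \Ent(\Match\text{-structure})$ which loses too much when $\Match(u,v)=R$. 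So instead I would go the other direction: control the entropy \emph{deficit} $n\log t - \Ent(Y\emid X') = \I(Y;X')$ directly, and show this mutual information is $O(1)$. The key observation is that all the dependence between $Y$ and $X'$ is funneled through the coordinates in $I$ where $X'$ is forced to agree with the underlying $X$, and where $Y$ in turn agrees with $X$. Concretely, condition on the random map $\pi$, the maps $m_i$, and on the set $K=\{i\in I: X'_i=X_i\}$; given these, $Y$ restricted to $[n]\setminus I$ is uniform and independent of everything, and within $I$ the dependence of $Y$ on $X'$ is mediated by $|J\cap K|$ where $J=\{i\in I: X_i=Y_i\}$. The crucial quantitative input, already established in the proof of Lemma~\ref{lem:error}(ii), is that $\E[|J\cap K|]\le mRM/|I| + o(1) = O(1)$ (and $\Pr[N_X=\emptyset]\le 5^{-M}\le n^{-10}$ by the hypothesis $M>100\log n$ and Theorem~\ref{thm:isoperimetry}(i)). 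I would bound $\I(Y;X')\le \I(Y; X', |J\cap K|) \le \Ent(|J\cap K|) + \I(Y;X'\emid |J\cap K|)$, bound $\Ent(|J\cap K|)=O(1)$ since $|J\cap K|$ is a sub-exponential (in fact hypergeometric-type) variable with constant mean, and bound the second term by $\E[|J\cap K|]\cdot\log t + \Pr[N_X=\emptyset]\cdot n\log t = O(1) + O(1)$: each forced match in $I$ costs at most $\log t$ bits of entropy, and the bad event $N_X=\emptyset$ costs at most $n\log t$ bits but has probability at most $n^{-10}$.

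The main obstacle I anticipate is making the reduction ``all dependence passes through $|J\cap K|$'' fully rigorous, i.e.\ correctly identifying the conditioning (on $\pi$, the $m_i$'s, $K$, and the \emph{values} $X_{J\cap K}$) under which $Y$ becomes conditionally independent of $X'$, and then averaging back. One has to be careful that conditioning on $X'$ does reveal its value on all of $I$, not just on $K$, but on $I\setminus K$ the coordinate $X'_i$ carries no information about $Y_i$ beyond what the (symmetric, uniform) construction already fixes — here the symmetry of the random maps $m_i$ and the uniform choice of $J$ of a fixed size are what save us. Handling the $N_X=\emptyset$ case cleanly (where $X'=x'_e$ is deterministic and hence contributes zero mutual information on that event, so it only enters through the law of total entropy with its $n^{-10}$ weight) is the other place to be careful, but the hypothesis $M>100\log n$ is exactly calibrated to make $5^{-M}\cdot n\log t = o(1)$.
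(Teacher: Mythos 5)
Part (i) of your proposal is correct and identical to the paper's argument. For part (ii), however, the proposal does not go through, and the failure is quantitative before it is structural. You correctly compute $\E[|J\cap K|]\approx mRM/|I|=10=O(1)$, but you then assert $\E[|J\cap K|]\cdot\log t=O(1)$. Since $t=4n$, this quantity is $10\log(4n)=\Theta(\log n)$. So even granting every structural step, your accounting yields $\Ent(Y\emid X')\ge n\log t-\Theta(\log n)$, not $n\log t-O(1)$; and an additive $\Theta(\log n)$ deficit is fatal for the round elimination, where the deficit must be an absolute constant $\alpha_0$ so that $\epsilon_0$ (chosen via $\BD(1/10\dmid 2\epsilon_0)>200(\alpha_0+1)$) can be a constant. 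Closing this $\Theta(\log n)$ gap is the entire content of the lemma, and your outline contains neither of the two ingredients the paper uses to do it. First, the paper invokes \autoref{lem:x-uniform}, hence \autoref{thm:isoperimetry}(ii) (the lower bound on $\E[\log|N_x|]$), to cap the total information $X'$ reveals about $X_I$ at $M\log k+o(1)$ bits; your proposal uses only part (i) of the isoperimetry theorem, so nothing in it prevents $X'$ from revealing far more about $X_I$ than the identity of the $\ge M$ matched coordinates. Relatedly, your conditional-independence claim is false as stated: the posterior $\dist(X_I\emid X'=x')$ is not uniform off $K$, because $X'$ constrains $X_I$ globally through the event $x'\in N_{X}$ and through the size $|N_X|$, so $X'_i$ for $i\in I\setminus K$ does carry information about $X_i$ (and hence about $Y_i$ when $i\in L$).

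Second, and decisively, even with the isoperimetric input the information $X'$ reveals about $Y_L=X_L$ given $L$ is genuinely $\approx 10\log k=\Theta(\log n)$, not $O(1)$. The paper recovers the constant via the decomposition $\Ent(Y\emid X')=\Ent(Y\emid X',L)+\Ent(L)-\Ent(L\emid Y,X')$ and the estimate $\Ent(L)-\Ent(L\emid Y,X')\ge 10\log(MR/10)-O(1)$: the receiver does not know \emph{which} $mR$ coordinates of $I$ carry the intentional matches, and the $\approx 10\log M$ bits of positional uncertainty almost exactly cancel the $10\log k$ loss because $\log k-\log M=\log 80$. Your ``all dependence is funneled through $|J\cap K|$'' framing has no counterpart of this cancellation between two $\Theta(\log k)$ quantities; it implicitly assumes the cost localizes to the $O(1)$ coordinates of $J\cap K$ at $\log t$ bits each, which both overcounts per coordinate (should be $\log k$, needing \autoref{lem:x-uniform}) and undercounts the compensating positional entropy of $L$. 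A further, smaller issue: your step $\Ent(|J\cap K|)=O(1)$ needs an \emph{upper} bound on $\E[\Match(X'_I,X_I)]$, whereas the computation you cite from \autoref{lem:error}(ii) is a lower bound; the isoperimetric theorem as stated controls only $\E[\log|N_x|]$ and $\Pr[N_x=\emptyset]$, not the upper tail of the overlap of a uniform element of $N_x$ with $x$.
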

\begin{proof}
Part (i) holds as $Y$ is uniformly distributed and independent of $X'$
whenever $\EE(u,v)=0$.

For part (ii) recall that if $\Match(u,v)=R$ one can construct $X$ and $Y$ by
uniformly selecting a size $mR$ set $L\subseteq I$ and selecting $X$ and $Y$
uniformly among all pairs satisfying $X_L=Y_L$. Recall that $L$ is the set of
coordinates the $mR$ matches between
$u^m$ and $v^m$ were mapped. These are the ``intentional matches'' between
$X_I$ and $Y_I$. Note that there may be also ``unintended matches'' between
$X_I$ and $Y_I$, but not too many: their expected number is
$(n/5-mR)/t<1/20$. As given any fixed
$L$, the marginal distribution of both $X$ and $Y$ are still uniform, so in
particular $X$ is independent of $L$ and so is $X'$ constructed from
$X$. Therefore we have
$$\Ent(Y\emid X') =\Ent(Y\emid X', L) + \Ent(L) - \Ent(L\emid Y, X').$$
We treat the terms separately. First we split the first term:
$$\Ent(Y\emid X',L)=\Ent(Y_L\emid X',L)+\Ent(Y_{[n]\setminus L}\emid
X',L,Y_L)$$
and use that $Y_{[n]\setminus L}$ is uniformly distributed for any fixed
$L$, $X'$ and $Y_L$, making
$$\Ent(Y_{[n]\setminus L}\emid X',L,Y_L)=(n-mR)\log t.$$
We have $X_L=Y_L$, thus
\begin{align*}
\Ent(Y_L\emid X',L)&=\Ent(X_L\emid X',L)\\
&\ge \frac{mR}{n/5}\Ent(X_I\emid X')\\
&\ge mR\log t-10\log k-\frac{MR}{5^{M-1}}\log k,
\end{align*}
where the first inequality follows by \autoref{lem:ent-subset} as $L$ is a
uniform and independent of $X$ and $X'$ and the second inequality follows from
\autoref{lem:x-uniform} that we will prove shortly and the formula defining
$m$.

The next term, $\Ent(L)$ is easy to compute as $L$ is a uniform subset of $I$
of size $mR$:
$$\Ent(L)=\log{n/5\choose mR}$$

It remains to bound the term $\Ent(L\emid Y, X')$. 
Let $Z=\{i\mid i\in I \text{ and } X'_i=Y_i\}$.
Note that $Z$ can be derived from $X',Y$ (as $I$ is fixed) 
hence $\Ent(L\emid Y,X')\leq \Ent(L\emid Z)$.
Further, let $C=|Z\setminus L|$. We obtain
\begin{align*}
\Ent(L\emid Y,X')&\le\Ent(L\emid Z)\leq \Ent(L\emid Z, C) + \Ent(C)\\
&< \E_{Z,C}\left[\log{n/5-|Z|+C \choose mR - |Z|+C}\right]
+ \E_{Z,C}\left[\log {|Z| \choose C}\right]+2
\end{align*}
where we used $\Ent(C)<2$.
Note that for any fixed $x'\in T$ and $x\in \supp(X\emid X'=x')$, we have 
$$\E[|Z|-C\emid X=x, X'=x']=\Match(x_I,x_I') mR /(n/5) \geq 10$$
as $\Match(x_I, x_I')\geq M$ by definition. 

Hence we have
$$\log{n/5\choose mR}-\log{n/5-|Z|+|C|\choose mR-|Z|+|C|}\ge10\log\frac
n{5m}-O(1),$$
$$\E_{Z,C}\left[\log {|Z| \choose C}\right] \le \E[|Z|] < 20.$$
Summing the estimates above for the various parts of $\Ent(Y\emid X')$ the
statement of the lemma follows.
\end{proof}

It remains to prove the following simple lemma that ``reverses'' the
conditional entropy bound in Theorem~\ref{thm:isoperimetry}(ii):

\begin{lemma}
\label{lem:x-uniform} For any $u,v\in[t']^{n'}$ we have
$\Ent(X_I\emid X')\geq \frac{n}{5}\log t - M\log k  - n\log k / 5^M$.
\end{lemma}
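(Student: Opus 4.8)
The plan is to derive the bound on $\Ent(X_I \emid X')$ from the entropy bound $\E_{x\sim\mu}[\log|N_x|]\ge(n/5-M)\log k - n\log k/5^M$ of \autoref{thm:isoperimetry}(ii), by viewing the construction of $X'$ from $X$ as a two-step random process and invoking the chain rule. First I would record that, by construction, $X$ is uniform on $[t]^n$, hence $X_I$ is uniform on $[t]^{n/5}$ and $\Ent(X_I)=(n/5)\log t$. The variable $X'$ is obtained from $X$ by first forming $N_X = \{z\in T\mid \Match(z_I,X_I)\ge M\}$ and then picking a uniform element of $N_X$ (or $x'_e$ if $N_X=\emptyset$). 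Crucially, $X'$ depends on $X$ only through $X_I$: the set $N_X$ is a function of $X_I$ alone, and the subsequent uniform choice uses only fresh shared randomness. So it suffices to bound $\Ent(X_I\emid X')$, and the pair $(X_I, X')$ carries all the relevant information.

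The key identity is the chain rule in the form
\begin{align*}
\Ent(X_I \emid X') = \Ent(X_I) - \Ent(X') + \Ent(X' \emid X_I).
\end{align*}
Here $\Ent(X_I)=(n/5)\log t$ exactly. For $\Ent(X'\emid X_I)$: given $X_I=x$ with $N_x\ne\emptyset$, $X'$ is uniform on $N_x$, contributing $\log|N_x|$; given $N_x=\emptyset$, $X'$ is the constant $x'_e$, contributing $0$. Hence $\Ent(X'\emid X_I) = \E_{x\sim\mu}[\log|N_x|\cdot\mathbf 1[N_x\ne\emptyset]] \ge \E_{x\sim\mu}[\log|N_x|]$, where on the right I use the convention $\log 0 = -1$ from \autoref{thm:isoperimetry}, so the inequality goes the right way (the empty-set term only subtracts). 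By \autoref{thm:isoperimetry}(ii) this is at least $(n/5-M)\log k - n\log k/5^M$. Finally $\Ent(X')$ is an entropy of a variable supported on $T\cup\{x'_e\}$, a set of size $k^{n/5}+1$, so $\Ent(X') \le \log(k^{n/5}+1) \le (n/5)\log k + o(1)$; more crudely $\Ent(X')\le(n/5)\log t$ suffices only if we are careless, so I want the sharper $\Ent(X')\le (n/5)\log k + 1$ to make the arithmetic close. Putting these together:
\begin{align*}
\Ent(X_I\emid X') &\ge \tfrac n5\log t - \big(\tfrac n5\log k + 1\big) + \big((\tfrac n5-M)\log k - \tfrac{n\log k}{5^M}\big)\\
&= \tfrac n5\log t - M\log k - \tfrac{n\log k}{5^M} - 1,
\end{align*}
which is the claimed bound up to the additive $1$ (absorbed into the slack, or avoided by noting $k\ge 1$ lets $\log(k^{n/5}+1)$ actually be bounded by $(n/5)\log k$ when $k^{n/5}\ge 1$, with the $+1$ harmlessly dropped into the error term since $n\log k/5^M$ already has room).

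The main obstacle I anticipate is bookkeeping around the $\log 0 = -1$ convention and the $x'_e$ exceptional case: one must check that the convention is used consistently so that the empty-ball event genuinely only hurts the adversary (i.e., lowers $\Ent(X'\emid X_I)$, which we are lower-bounding) rather than helping, and that $\Ent(X')$ is still bounded by $\approx(n/5)\log k$ despite the extra point $x'_e$ in the support. A secondary point is confirming that $X'$ is conditionally independent of $X_{[n]\setminus I}$ given $X_I$ — this is immediate from the construction (the random selection inside $N_X$ uses only new randomness and $N_X$ depends only on $X_I$) but should be stated, since it is what lets us reduce the whole question to the pair $(X_I,X')$ and apply the isoperimetric bound verbatim.
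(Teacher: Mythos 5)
Your proposal is correct and follows essentially the same route as the paper's proof: the chain rule identity $\Ent(X_I\emid X')=\Ent(X'\emid X_I)+\Ent(X_I)-\Ent(X')$, the bound $\Ent(X')\le\log|\supp(X')|\approx\frac n5\log k$, and the identification $\Ent(X'\emid X_I)=\E_{x\sim\mu}[\log|N_x|]$ fed into Theorem~\ref{thm:isoperimetry}(ii). Your extra care about the $x'_e$ point in the support and the $\log 0$ convention is sound (the paper simply ignores the additive $O(1)$ there), so no changes are needed.
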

\begin{proof}
Using the fact that $\Ent(A,B) = \Ent(A\emid B) +\Ent(B) = \Ent(B \emid A) + \Ent(A)$ we get
\begin{align*}
\Ent(X_I\emid X') &= \Ent(X' \emid X_I) + \Ent(X_I) - \Ent(X')\\
&\ge \frac{n}{5}\log t + \Ent(X'\emid X_I) - \frac{n}{5}\log k,
\end{align*}
where in the last step we used $\Ent(X')\le \log|\supp(X')| 
= \log |T| = \frac{n}{5}\log k$ and $\Ent(X_I)=(n/5)\log t$ 
as $X$ is uniformly distributed.

Observe that $\Ent(X'\emid X_I)=\Ent(X'\emid X) = \E_{x\sim \mu}[\log|N_x|]$,
where $\log 0$ is now taken to be $0$.
From \autoref{thm:isoperimetry}(ii) we get 
$\Ent(X'\emid X)\geq \frac{n}{5}\log k - M\log k - n\log k/5^M$ finishing the
proof of the lemma.
\end{proof}

\subsection{The round elimination lemma}

Let $\nu_n$ be the uniform distribution on $[t]^n\times [t]^n$, where we set $t=4n$.
The following lemma gives the base case of the round elimination argument.
\begin{lemma}\label{lem:terminal} 
Any 0-round deterministic protocol for $\EE_n$ has at least 0.22 
distributional error on $\nu_n$, when $n\geq 1$.
\end{lemma}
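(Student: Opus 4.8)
The plan is to prove that any $0$-round deterministic protocol for $\EE_n$ must err on at least a $0.22$ fraction of inputs under $\nu_n$, by observing that a $0$-round protocol is simply a constant: since no messages are exchanged, the output is a fixed bit $b \in \{0,1\}$ independent of $(x,y)$. Thus the error is exactly the $\nu_n$-probability of the event $\{\EE_n(x,y) \ne b\}$, and the best a protocol can do is pick $b$ to be the more likely value of $\EE_n$, incurring error $\min\{\Pr_{\nu_n}[\EE_n = 0], \Pr_{\nu_n}[\EE_n = 1]\}$. So it suffices to show that both of these probabilities are at least $0.22$.

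First I would compute $p_0 \defeq \Pr_{(x,y)\sim\nu_n}[\EE_n(x,y) = 0]$. Since $x,y$ are independent and uniform on $[t]^n$ with $t = 4n$, each coordinate satisfies $x_i = y_i$ independently with probability $1/t$, so $p_0 = (1 - 1/t)^n = (1 - 1/(4n))^n$. This is a decreasing function of $n$ (for $n \ge 1$) tending to $e^{-1/4} \approx 0.7788$, so $p_0 \ge e^{-1/4} > 0.77$ for all $n \ge 1$; in particular $p_0 \ge 0.22$. Consequently $p_1 \defeq \Pr_{\nu_n}[\EE_n = 1] = 1 - p_0$, and I must check $p_1 \ge 0.22$ as well, i.e. $p_0 \le 0.78$. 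For $n = 1$ we have $p_0 = 1 - 1/4 = 0.75 \le 0.78$, and since $p_0$ is decreasing in $n$, $p_0 \le 0.75 < 0.78$ for all $n \ge 1$, giving $p_1 \ge 0.25 > 0.22$.

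Putting these together: the error of any $0$-round deterministic protocol is $\min\{p_0, p_1\} \ge \min\{0.77, 0.25\} = 0.25 > 0.22$, which proves the lemma. The only mild subtlety is verifying the monotonicity claims for the bound $(1-1/(4n))^n$, but these are elementary calculus facts (the sequence $(1-1/(4n))^n$ increases to its limit $e^{-1/4}$, or one can simply bound $(1-1/(4n))^n \in [3/4, e^{-1/4}]$ directly using $1 - x \le e^{-x}$ for the upper bound and a short induction or convexity argument for the lower bound); there is no real obstacle here, as this is genuinely the trivial base case of the round elimination.
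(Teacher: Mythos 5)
There are two genuine problems with your argument, even though the overall plan (compute $\Pr_{\nu_n}[\EE_n=0]$ and show it is bounded away from both $0$ and $1$) is the same as the paper's.

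First, a $0$-round deterministic protocol is \emph{not} a constant bit. Under the paper's conventions one designated player (say Bob) declares the output, and that player sees his own input, so the output is an arbitrary function $b(y)$ of $y$. Your reduction to $\min\{p_0,p_1\}$ over the joint distribution therefore only rules out constant-output protocols, which is a strictly smaller class. The gap is easily closed here because $x$ and $y$ are independent under $\nu_n$ and $x$ is uniform, so for \emph{every} fixed $y$ one has $\Pr_{x\sim\mu}[\EE(x,y)=0]=(1-1/t)^n$, the same value regardless of $y$; hence for each $y$ the conditional error is at least $\min\{(1-1/t)^n,\,1-(1-1/t)^n\}$ whichever bit Bob outputs, and averaging over $y$ preserves the bound. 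This conditioning step is exactly what the paper's proof does and what your write-up omits.

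Second, your monotonicity claim is backwards, and it propagates into false numerical bounds. The sequence $p_0=(1-1/(4n))^n$ is \emph{increasing} in $n$, from $3/4$ at $n=1$ up toward its limit $e^{-1/4}\approx 0.7788$; it is not decreasing and it never reaches $0.77$ for small $n$ (at $n=1$ it is $0.75$, at $n=2$ it is $(7/8)^2\approx 0.766$). So both of your stated bounds ``$p_0\ge e^{-1/4}>0.77$'' and ``$p_0\le 0.75$'' are wrong, as is your final claim that the error is at least $0.25$: for large $n$ the error of the better constant output is only $1-p_0\to 1-e^{-1/4}\approx 0.2212$. The correct chain is $3/4\le p_0< e^{-1/4}<0.78$, giving $\min\{p_0,1-p_0\}=1-p_0>1-e^{-1/4}>0.22$, which is precisely why the lemma is stated with the constant $0.22$ rather than $0.25$.
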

\begin{proof}
The output of the protocol is decided by a single player, say Bob. For any
given input $y\in[t]^n$ we have
$3/4 \leq\Pr_{x\sim\mu}[\EE(x,y)=0] < e^{-1/4} < 0.78$. 
Therefore the distributional error is at least $0.22$ for any given $y$
regardless of the output Bob chooses, thus the overall error is also at
least $0.22$.
\end{proof}
Now we give our full round elimination lemma.
\begin{lemma}\label{lem:roundel} 
Let $r>0, c ,n$ be an integers such that $c < (n\log n)/2$. 
There is a constant $0<\epsilon_0<1/200$ such that 
if there is an $r$-round deterministic protocol with $c$-bit messages for 
$\EE_n$ that has $\epsilon_0$ error on $\nu_n$, 
then there is an $(r-1)$-round deterministic protocol with $O(c)$-bit messages for 
$\EE_{n'}$ that has $\epsilon_0$ error on $\nu_{n'}$, 
where $n' = \Omega(n/2^\frac{5c}{4n})$.
\end{lemma}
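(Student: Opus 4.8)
The plan is to assemble the pieces developed in Sections~\ref{sec:isoperimetry} and \ref{sec:embed} into the round-elimination step. Given an $r$-round deterministic protocol $Q$ with $c$-bit messages for $\EE_n$ and $\epsilon_0$ distributional error, first reduce the error to a suitable small constant $\epsilon$ by parallel repetition and majority vote (this multiplies $c$ by a constant and keeps the round count fixed; in the distributional setting we simply invoke that averaging over coins preserves small error). Then apply \autoref{lem:determine-s} to fix the first message of Alice on a set $S$ with $\mu(S)=2^{-c-1}$ on which $\Pr_{y\sim\mu}[Q(x,y)\ne\EE(x,y)]\le 2\epsilon$ for every $x\in S$. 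Feeding $S$ into \autoref{thm:isoperimetry} produces the set $T\subset S$ of size $k^{n/5}$ and coordinate set $I$ of size $n/5$, with $k = t/2^{5(c+1)/(4n)+1}$ and $M = nk/(20t) = \Theta(n/2^{5c/(4n)})$; note $M = \Omega(n')$ where $n' = M/10$, which is where the claimed bound $n' = \Omega(n/2^{5c/(4n)})$ comes from.

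Next I would set up the embedding exactly as in \autoref{sec:embed}: with $n' = M/10$, $t' = 4n'$, and (taking $R=1$ for the purposes of this lemma, so $m = 2n/M$) the players turn an input pair $(u,v)\in[t']^{n'}\times[t']^{n'}$ drawn from $\nu_{n'}$ into $(X',Y)\in[t]^n\times[t]^n$ via repetition, random coordinate relabelings, and a random embedding into the coordinates $I$, with the remaining coordinates filled uniformly. The new $(r-1)$-round protocol is: Alice and Bob run this randomization (using shared coins — in the distributional model we then fix these coins optimally at the end), Alice computes $X'$ and Bob computes $Y$, and they execute $Q$ on $(X',Y)$ but with Alice's first message suppressed, since by \propref{prop:3} it is constant on $\supp(X') = T\cup\{x'_e\}\subseteq S$. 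The output of the new protocol is declared to be $Q(X',Y)$.

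The error analysis is the heart of the argument and combines three facts. From \autoref{lem:error}, the embedded instance satisfies $\EE(X',Y) = \EE(u,v)$ with probability bounded away from the error threshold (roughly $0.77$ resp.\ $0.80$ for the two cases $\Match(u,v)\in\{0,R\}$). From \autoref{lem:y-uniform} (using $M > 100\log n$, which holds in the relevant parameter range), for each fixed small-instance input $\Ent(Y\mid X') = n\log t - O(1)$, hence by averaging the conditional entropy $\Ent(Y\mid X'=x')$ is close to $n\log t$ for all but a small fraction of $x'\sim X'$; for those typical $x'$, \autoref{lem:kl-err} (applied with $\gamma$ a suitable constant, using that $\BD(\gamma\mid 2\epsilon)$ is a constant once $\epsilon$ is a small enough constant) gives $\Pr_{y\sim Y\mid X'=x'}[Q(x',y)\ne\EE(x',y)]\le\gamma$. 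Putting these together by a union bound: the probability that the new protocol errs on $(u,v)$ is at most the probability $\EE(X',Y)\ne\EE(u,v)$, plus the probability $X'$ is atypical, plus $\gamma$, and by choosing $\epsilon_0$ (hence $\epsilon,\gamma$) small enough this total is at most $\epsilon_0$. Averaging over the shared coins and fixing them optimally yields a deterministic $(r-1)$-round protocol for $\EE_{n'}$ with $\epsilon_0$ error on $\nu_{n'}$ and messages of length $O(c)$. The hypothesis $c < (n\log n)/2$ guarantees $k \ge \sqrt t$, so that $M\ge 1$ and the parameters $n', t'$ are meaningful and $n' = \Omega(n/2^{5c/(4n)})$.

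The main obstacle is bookkeeping the constants so that the same $\epsilon_0$ works on both sides of the induction: one must verify that the error introduced by the embedding ($\sim 0.2$ from \autoref{lem:error}), the loss from atypical $X'$ ($\le 5^{-M}$-type terms from the entropy deficit plus Markov on $n\log t - \Ent(Y\mid X')$), and the contribution $\gamma$ from \autoref{lem:kl-err} can all be absorbed into a fixed $\epsilon_0 < 1/200$, while simultaneously choosing $\epsilon$ (and the repetition count) small enough that $\BD(\gamma\mid 2\epsilon)$ is large enough for \autoref{lem:y-uniform}'s entropy bound to feed \autoref{lem:kl-err}. The wrinkle — also the reason the $\Match(u,v)=R$ parameter $R$ is carried around — is that \autoref{lem:error}(ii) gives $\EE(X',Y)=\EE(u,v)$ only with constant probability rather than probability $1-o(1)$, so the round-elimination recursion cannot amplify indefinitely; but since we only eliminate one round per step and we are free to re-amplify $Q$ at the start of each step, the constant $\epsilon_0$ is stable across the at most $\log^* n$ iterations, which is exactly what \autoref{thm:main2} needs.
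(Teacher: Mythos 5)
Your construction runs the embedding once and outputs $Q(X',Y)$, claiming the total error can be pushed below $\epsilon_0$ by choosing constants. This fails: the dominant error term is $\Pr[\EE(X',Y)\ne\EE(u,v)]$, which by \autoref{lem:error} is an absolute constant of the embedding itself --- about $0.22$ in the disjoint case (since $\Pr[\EE(X',Y)=0]=(1-1/t)^n\to e^{-1/4}$) and up to $0.20$ in the matching case. No choice of $\epsilon_0$, $\epsilon$, or $\gamma$ shrinks it, so a single run of the reduction has distributional error on the order of $0.2$--$0.4$, nowhere near $\epsilon_0<1/200$. You notice this ``wrinkle'' at the end but propose to fix it by re-amplifying $Q$; that is doubly wrong: amplifying $Q$ does nothing to the embedding's own failure probability, and in any case $Q$ is a \emph{deterministic} protocol with \emph{distributional} error, so parallel repetition plus majority is vacuous (it returns the same answer every time, and the inputs on which $Q$ is simply wrong stay wrong). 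The paper's actual mechanism is to repeat the \emph{embedding} $k_0$ times with fresh shared randomness, run $Q$ in parallel on all $k_0$ instances, and take the majority of the outputs; since each run independently agrees with $\EE(u,v)$ with probability at least $0.6$, Chernoff drives the majority's error below $\epsilon_0/2$. This is the step your proof is missing.

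The second gap is your decision to take $R=1$. The embedding's parameters (in particular $m=2n/(MR)$) and the analyses of \autoref{lem:error}(ii) and \autoref{lem:y-uniform}(ii) are only valid when $\Match(u,v)$ equals the chosen $R$ exactly, so with $R=1$ your protocol is unanalyzed on all pairs with $\Match(u,v)\ge2$. Under $\nu_{n'}$ these pairs have probability on the order of $\binom{n'}{2}(4n')^{-2}\approx 1/32$, which already exceeds $\epsilon_0<1/200$; hence the resulting protocol cannot be claimed to have $\epsilon_0$ error on $\nu_{n'}$ even for a single application of the lemma. The paper handles this by running the whole scheme for every $R\in[R_0]$, taking the OR of the $R_0$ majority votes, and choosing $R_0$ so that $\Pr_{\nu_{n'}}[\Match(u,v)\ge R_0]\le\epsilon_0/2$; the unhandled high-match inputs are then absorbed into the error budget. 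Both the $k_0$-fold majority and the OR over $R\in[R_0]$ multiply the message length only by the constant $R_0k_0$, which is how the $O(c)$ bound survives. The remainder of your outline (\autoref{lem:determine-s}, \autoref{thm:isoperimetry}, the entropy/Markov/\autoref{lem:kl-err} chain for typical $x'$, and fixing the shared coins at the end) matches the paper.
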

\begin{proof}
We start with an intuitive description of our reduction. Let us be given the
deterministic protocol $Q$ for $\EE_n$ that errs on an $\epsilon_0$ fraction
of the inputs. To solve an instance $(u,v)$ of the smaller $\EE_{n'}$ problem
the players perform the embedding procedure
described in previous subsection $k_0$ times independently for each parameter
$R\in[R_0]$. Here $k_0$ and $R_0$ are constants we set later. They perform the
protocol $Q$ in parallel for each of the $k_0R_0$ pairs of
inputs they generated. Then they take the majority of the $k_0$ outputs for a
fixed parameter $R$. We show that this result gives the correct value of
$\EE(u,v)$ with large probability provided that $\Match(u,v)=0$ or
$R$. Finally they take the OR of these results for the $R_0$ possible values
of $R$. By the union bound this gives the correct value $\EE(u,v)$ with large
probability provided $\Match(u,v)\le R_0$. Fixing the random choices of the
reduction we obtain a deterministic protocol. The probability of error for the
uniform random input can only grow by the small probability that
$\Match(u,v)>R_0$ and we make sure it remains below $\epsilon_0$.
The rest of the proof makes this argument precise.

For random variables $X'$ and $Y$ constructed in \autoref{sec:embed}, 
\autoref{lem:y-uniform} guarantees that 
$\Ent(Y\emid X')\ge n\log t - \alpha_0$ 
for some constant $\alpha_0$, as long as $M>100\log n$ and $\Match(u,v)=R$. Let $\epsilon_0$
be a constant such that $\BD(1/10\dmid 2\epsilon_0) > 
200(\alpha_0 + 1)$.   
Note that such $\epsilon_0$ can be found as $\BD(1/10\dmid \epsilon)$ 
tends to infinity as $\epsilon$ goes to 0.  
We can bound 
$\Pr_{(x,y)\sim\nu_m}[\Match(x,y) \ge l] \le 1/(4^l l!)$ for all $m\ge1$.
We set $R_0$ such that 
$\Pr_{(x,y)\sim\nu_m}[\Match(x,y) \ge R_0 ] \le \epsilon_0 / 2$  for all $m\ge1$.

Let $Q$ be a deterministic protocol for $\EE_n$ that sends $c < (n\log n)/2$
in each round 
and that has $\epsilon_0$ error on $\nu_n$.
Let $S$ be as constructed in \autoref{lem:determine-s} and let $M$ be
as defined in \autoref{thm:isoperimetry}. 
We have $M=\frac{n}{40}2^{\frac{-5(c+1)}{4n}}$ as $t=4n$ 
and $\mu(S)=2^{-(c+1)}$ by \autoref{lem:determine-s}.
Note that by our choice of $c$, we have $M>100\log n$, hence the 
hypotheses of \autoref{lem:y-uniform} are satisfied.

Let $n' = M/10 = \frac{n}{400}2^{\frac{-5(c+1)}{4n}}$. 
Now we give a randomized protocol $Q'$ for $\EE_{n'}$.
Suppose the players are given an instance of $\EE_{n'}$, 
namely the vectors $(u,v)\in[4n']^{n'}\times[4n']^{n'}$.
Let
$k_0 = 10\log (R_0 + 1/\epsilon_0)$.
For $R\in[R_0]$ and $k\in [k_0]$, the players construct the 
vectors $X'_{R,k}$ and $Y_{R,k}$ as described in \autoref{sec:embed} 
with parameter $R$ and with fresh randomness for each of the $R_0k_0$
procedures.
The players run $R_0 k_0$ instances of protocol $Q$ in parallel, 
on inputs $X'_{R,k}, Y_{R,k}$ for $R\in[R_0]$ and $k\in[k_0]$.
Note that the first message of the first player, Alice, is fixed for all
instances of $Q$ by \propref{prop:3} and \autoref{lem:determine-s}. 
Therefore, the second player, Bob, can start the protocol assuming 
Alice has sent the fixed first message.
After the protocols finish, for each $R\in[R_0]$, the last player who received
a message computes $b_R$ as the majority of $Q(X_{R,k}',Y_{R,k})$
for $k\in [k_0]$. Finally, this player outputs $0$ if $b_R=0$ for all
$R\in[R_0]$ and outputs $1$ otherwise. 

Suppose now that $\EE(u,v) = 0$. By \autoref{lem:error}(i),
we have $\Pr[\EE(X'_{R,k},Y_{R,k}) = 0] \ge 0.77$ for each $R$ and $k$.
Recall that that $Y_{R,k}$ is 
distributed uniformly for each $R$ and $k$ and since $\EE(u,v)=0$, 
it is independent of $X'_{R,k}$.
Therefore, by $X'_{R,k}\in S$ (\propref{prop:3}) and the fact that 
$\Pr_{y\sim \mu}[Q(x,y)\neq\EE(x,y)]\leq 2\epsilon_0$ 
for all $x\in S$ as per \autoref{lem:determine-s},
we obtain $\Pr[Q(X'_{R,k},Y_{R,k}) = 0] \ge 0.77 - 2\epsilon_0 > 0.76$.
By the Chernoff bound we have $\Pr[b_R = 1]  < \epsilon_0/(2R_0)$, and by the
union bound $\Pr[Q'\hbox{ outputs }0]\ge1-\epsilon_0 /2$. 

Let us now consider the case $\Match(u,v) = R$ for some $R\in[R_0]$. Fix any
$k\in[k_o]$ and set $X'=X'_{R,k}$, $Y=Y_{R,k}$.
By \autoref{lem:error}(ii), $\Pr[\EE(X',Y) = 1]\ge 0.80$.
By \autoref{lem:y-uniform}, 
$\Ent(Y\emid X')\geq n\log t -\alpha_0$,
and so we have $\E_{x'\sim X'}[\Ent(Y) - \Ent(Y\emid X'=x')] < 
\alpha_0$. Let $Z=\{x'\mid\Ent(Y) - \Ent(Y\emid X'=x') 
> 10\alpha_0\}$. Note that $Y$ is uniform, and has full entropy, 
therefore $\Ent(Y) - \Ent(Y\emid X'=x') \geq 0$.
Using Markov's inequality we have $\Pr[X'\in Z]<1/10$.
When $X'\in Z$ we cannot effectively bound the probability that 
$\EE(u,v)\neq Q(X', Y)$; namely, we bound this probability by 1.
But if $X'\notin Z$, then 
by \autoref{lem:kl-err} and our choice of $\epsilon_0$, we have 
$\Pr[\EE(X', Y)\neq Q(X', Y)] < 1/10$. Furthermore, by 
\autoref{lem:error}(ii), $\Pr[\EE(u,v) \neq \EE(X', Y)]< 0.20$
 hence with probability at least $0.60$ we have 
$\EE(u,v) = Q(X', Y)$. This happens independently for all the values of
$k\in[k_0]$, so by the Chernoff bound and our choice of $k_0$, we have
$\Pr[Q'\hbox{ outputs }0]\le\Pr[b_R = 0] < \epsilon_0 / 2$.

Finally, $\Pr_{(u,v)\sim \nu_{n'}}[\Match(u,v) \ge R_0] \le \epsilon_0 /2$ by
our choice of $R_0$.
Note that the protocol $Q'$ uses a shared random bit string, say $W$, in the
construction of the vectors $X'_{R,k}$ and $Y_{R,k}$.
Hence, overall, we have
\begin{align*}
\Pr_{W, (u,v)\sim\nu_{n'}}[\EE(u,v)  = Q'(u,v)] \ge 1 - \epsilon_0
\end{align*}
Since we measure the error of the protocol under a distribution, 
we can fix $W$ to a value without 
increasing the error under the aforementioned distribution by the so called easy direction of 
Yao's lemma. Namely, there exists a $w\in \supp(W)$ such that
\begin{align*}
\Pr_{(u,v)\sim\nu_{n'}}[\EE(u,v)  = Q'(u,v)\emid W=w] \ge 1 - \epsilon_0
\end{align*}
Fix such $w$. Observe that $Q'$ is a $(r-1)$-round protocol for $\EE_{n'}$ where 
$n'=\frac{n}{400}2^\frac{-5(c+1)}{4n}=\Omega(n/2^\frac{5c}{4n})$ 
and it sends at most $R_0k_0c = O(c)$ bits in each message.
Furthermore, $Q'$ is deterministic and
has at most $\epsilon_0$ error on $\nu_{n'}$ as desired.
\end{proof}

\begin{theorem}
\label{thm:main2}
There exists a constant $\epsilon_0$ such that for any $r\leq\log^*n$,
an $r$-round deterministic protocol for $\EE_n$ which has
$\epsilon_0$ error on $\nu_n$ sends at least one message of size 
$\Omega(n\log^{(r)}n)$.
\end{theorem}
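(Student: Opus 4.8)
The plan is to derive \autoref{thm:main2} from the round elimination lemma (\autoref{lem:roundel}) by iterating it, using \autoref{lem:terminal} as the base case. Suppose for contradiction that there is an $r$-round deterministic protocol $Q$ for $\EE_n$ with $\epsilon_0$ error on $\nu_n$ in which every message has size $c\le\delta\, n\log^{(r)}n$ for a sufficiently small constant $\delta>0$; here $\epsilon_0$ is the constant from \autoref{lem:roundel}. The idea is to apply \autoref{lem:roundel} exactly $r$ times. Each application reduces the number of rounds by one and replaces $\EE_n$ by $\EE_{n'}$ with $n'=\Omega(n/2^{5c/(4n)})$, while the message size grows by only a constant factor and the error stays at $\epsilon_0$. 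After $r$ steps we obtain a $0$-round protocol for $\EE_{n_r}$ with $\epsilon_0<1/200<0.22$ error on $\nu_{n_r}$, contradicting \autoref{lem:terminal} as long as $n_r\ge1$.

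The crux is the bookkeeping on the sequence of problem sizes $n=n_0,n_1,\ldots,n_r$ and the message-size bounds $c=c_0,c_1,\ldots$. First I would set up the recursion $n_{i+1}=\Omega(n_i/2^{5c_i/(4n_i)})$ and $c_{i+1}=O(c_i)$, with $c_i\le A^i c$ for an absolute constant $A$ coming from the $O(c)$ in \autoref{lem:roundel}. The key observation is that if at every stage the current message size satisfies $c_i\le \eta\, n_i$ for a small enough constant $\eta$, then $2^{5c_i/(4n_i)}$ is bounded by an absolute constant, so $n_{i+1}\ge n_i/\beta$ for a constant $\beta$; but this crude bound loses a constant factor per round and would only allow a constant number of rounds. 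To get the iterated logarithm one needs the stronger fact that the message size is small relative to $n\log n$: when $c\le \delta\, n\log^{(r)}n$, one step of the reduction takes us from $n$ down to roughly $n/2^{5c/(4n)}=n/ (\text{something like }n^{5\delta\log^{(r)}n/(4\log n)})$, i.e., $\log n_1$ is a constant fraction of $\log^{(r)} n\cdot$ (a slowly shrinking correction), so that $\log^{(r-1)}n_1$ is still $\Omega(\log^{(r-1)}n)$ — effectively the reduction ``peels off'' one logarithm and the invariant $c_i\le \delta_i\, n_i \log^{(r-i)}n_i$ is maintained with $\delta_i$ shrinking only by a constant factor each step. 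I would prove by induction on $i$ that $n_i\ge \exp^{(r-i)}(\Omega(\log^{(r)}n))$ and $c_i\le \delta_i n_i\log^{(r-i)} n_i$ with $\delta_i\ge \delta/A^i$; choosing $\delta$ small enough (depending on $r\le\log^* n$, but the $A^r$ factor is absorbed since $r\le\log^*n$ is tiny) keeps the hypothesis $c_i<(n_i\log n_i)/2$ of \autoref{lem:roundel} valid and guarantees $n_r=\Omega(1)\ge1$ after $r$ steps.

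Concretely, I would first show $\log n_1 \ge \tfrac14 \log^{(r)} n - O(1)$ from $n_1 = \Omega(n/2^{5c/(4n)})$ and $c\le \delta n \log^{(r)}n$: indeed $\log n_1 \ge \log n - \tfrac{5c}{4n} - O(1) \ge \log n(1 - \tfrac{5\delta}{4}\cdot\tfrac{\log^{(r)}n}{\log n}) - O(1)$, which is $\ge \tfrac{1}{2}\log n$ for large $n$, hence $\log^{(r-1)} n_1 \ge \log^{(r-1)} n - 1$ after one more logarithm, so that $\log^{(r-1)}n_1 = \Omega(\log^{(r-1)}n)$; thus $c_1 = O(c) = O(\delta n \log^{(r)} n)$ and we want to compare this with $n_1 \log^{(r-1)} n_1$. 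Using $n_1 = \Omega(n / n^{5\delta \log^{(r)} n / (4\log n)})$ one checks $n_1 \geq n^{1 - o(1)} \cdot$ is too lossy; instead the right statement is $n_1 \log^{(r-1)} n_1 \geq \Omega(n_1 \log^{(r-1)} n) \geq \Omega( (n/2^{5c/(4n)}) \log^{(r-1)}n )$ and one verifies $c_1 = O(c)$ is at most a small constant times this, which reduces to checking $2^{5c/(4n)} \cdot c = O(n\log^{(r-1)} n )$ — true since $c = O(n)$ forces $2^{5c/(4n)} = 2^{O(1)}$ once we are past the first round (and for the very first round one uses $\log^{(r)}n \le \log^{(r-1)}n$ directly). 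I would organize this as a clean induction, handling the first step (where $c$ may be as large as $\delta n\log^{(r)}n$, still $o(n\log n)$) and the subsequent steps (where $c_i = O(n_i)$) uniformly under the invariant above.

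The main obstacle I anticipate is exactly this last chain of inequalities: making the constants line up so that the invariant ``$c_i \le \delta_i\, n_i\,\log^{(r-i)}n_i$ with $\delta_i$ not too small'' is genuinely preserved across all $r$ steps simultaneously, given that each round multiplies message length by an unspecified constant $A$ and shrinks $n$ in a way that depends on the current $c_i/n_i$ ratio. The delicate point is that $r$ itself grows (up to $\log^* n$), so the accumulated factor $A^r$ must be shown harmless — which works because $\exp^{(r)}$ for $r=\log^*n$ is astronomically larger than any $A^r$, so the slack in $n_i$ swamps the blow-up in $c_i$; nevertheless writing this so that a single choice of $\delta$ works for all $r\le\log^* n$ requires care. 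Once the invariant is established the conclusion is immediate: after $r$ rounds we have a $0$-round protocol for $\EE_{n_r}$ with $n_r\ge 1$ and error $\epsilon_0 < 0.22$, contradicting \autoref{lem:terminal}, so the original message size must have been $\Omega(n\log^{(r)}n)$.
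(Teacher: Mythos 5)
Your proposal follows essentially the same route as the paper: iterate \autoref{lem:roundel} $r$ times while maintaining the invariant that the current message length is at most a small constant times $n_i\log^{(r-i)}n_i$, then contradict \autoref{lem:terminal}; the key inequality you isolate, $2^{5c/(4n)}\cdot c=O(n\log^{(r-1)}n)$, i.e.\ $\log^{(r)}n=\log(\log^{(r-1)}n)\ll(\log^{(r-1)}n)^{1-5\delta/4}$, is exactly the paper's displayed computation (the paper keeps a single constant $\gamma<4/5$ throughout and absorbs the factor $\alpha$ from the lemma into that polynomial-versus-logarithmic gap, rather than letting $\delta_i$ decay). One slip worth fixing: your claim that $c_i=O(n_i)$, hence $2^{5c_i/(4n_i)}=2^{O(1)}$, for all steps after the first is false --- the invariant only gives $c_i/n_i\lesssim\delta_i\log^{(r-i)}n_i$, which is far from bounded at intermediate steps --- but this does not damage the argument, since the same ``peel off one logarithm'' verification you carry out for the first step applies verbatim at step $i$ with $r$ replaced by $r-i$, which is how the paper's induction actually closes.
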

\begin{proof}
Suppose we have an $r$-round protocol with $c$-bit messages for $\EE_n$ that has $\epsilon_0$ error on $\nu_n$, where $c=\gamma n\log^{(r)}n$ for some $\gamma<4/5-o(1)$. By \autoref{lem:roundel}, this protocol can be converted to an
$r-1$ round protocol with $\alpha c$-bit messages for $\EE_{n'}$ that has 
$\epsilon_0$-error on $\nu_{n'}$, where $n'=\beta n/2^{5c/4n}$ for some $\alpha, \beta >0$. 
 We
only need to verify that $\alpha c \leq \gamma n'\log^{(r-1)}n'$. We have
\begin{align*}
\gamma n'\log^{(r-1)} n' &= \gamma\beta n/2^{5c/4n}\log^{(r-1)}
(\beta n/2^{5c/4n})\\
&= \gamma\beta n/2^{\frac{5\gamma}{4}\log^{(r)}n}\log^{(r-1)}
(\beta n/2^{5c/4n})\\
&\geq \gamma\beta n \left(\log^{(r-1)}n\right)^{1-\frac{5\gamma}{4}-o(1)}\\
&\geq \gamma\alpha n\log^{(r)}n
\end{align*}
for $\gamma< 4/5 - o(1)$ and large enough $n$. Therefore, by iteratively applying
\autoref{lem:roundel} we obtain a $0$-round protocol
for $\EE_{\bar n}$ that makes $\epsilon_0$ error on 
$\nu_{\bar n}$ for some $\bar  n$ satisfying 
$\gamma {\bar n}^2 = \gamma \bar n \log^{(0)} \bar n\geq c \alpha^r$. 
Therefore
 $\bar n \geq 1$ and since $\epsilon_0< 0.22$, the protocol we obtain contradicts \autoref{lem:terminal}, showing that the protocol we started with cannot exists.
\end{proof}
\begin{remark} We note that in the proof of \autoref{thm:main}, 
to show that a protocol with small communication does not exist, 
we start with the given protocol and apply the round elimination lemma 
(i.e., \autoref{lem:roundel}) $r$ times to obtain a $0$-round protocol with small
error probability, which is shown to be impossible by \autoref{lem:terminal}.
Alternatively, one can apply the round elimination $r-1$ times to obtain a 
$1$-round protocol with $o(n\log n)$ communication for $\EE_{n}$, 
which is ruled out by \autoref{thm:singleround}.
\end{remark}

\section{Discussion}
\label{sec:discussion}
The $r$-round protocol we gave in \autoref{sec:upperbound} solves the sparse set
disjointness problem in $O(k\log^{(r)}k)$ total communication. As we proved in
\autoref{sec:lowerbound} this is optimal. The same, however,
  cannot be said of the error probability. With the same protocol, but with more
  careful setting of the parameters the exponentially small error $O(2^{-\sqrt
    k})$  of the $\log^*k$-round protocol can be further decreased to
  $2^{-k^{1-o(1)}}$.

For small (say, constant) values of $r$ this protocol
  cannot achieve exponentially small error error without the increase in the
  complexity if the universe size $m$ is unbounded. But if $m$ is polynomial in
  $k$ (or even slightly larger, $m=\exp^{(r)}(O(\log^{(r)}k))$), we can replace the last round of the protocol by one player
deterministically sending his or her entire ``current set'' $S_r$. With careful
setting of the parameters in other rounds, this modified protocol has the same
$O(k\log^{(r)}k)$ complexity but the error is now exponentially small:
$O(2^{-k/\log k})$. Note that in our lower bound on the $r$-round complexity
of the sparse set disjointness we we use the exists-equal problem with
parameters $n=k$ and $t=4k$. This corresponds to the universe size
$m=tn=4k^2$. In this case any protocol
solving the exists-equal problem with $1/3$ error can be strengthened to
exponentially small error using
the same number of rounds and only a constant factor more communication.

Our lower and upper bounds match for the exists-equal problem with parameters
$n$ and $t=\Omega(n)$, since the upper bounds were established without any
regard of the universe size, while the lower bounds worked for $t=4n$.
Extensions of the techniques presented in this paper give matching bounds also
in the case $3\le t<n$, where the $r$-round complexity is
$\Theta(n\log^{(r)}t)$ for $r\le\log^*t$. Note, however, that in this case one
needs to consider significantly more complicated input distributions
and a more refined isoperimetric inequality, that does not permit arbitrary mismatches. 
The $\Omega(n)$ lower bound applies for the exists-equal problem of parameters $n$
and $t\ge3$ regardless of the number of rounds, as the disjointness problem on a
universe of size $n$ is a sub-problem. For $t=2$ the situation
is drastically different, the exists-equal problem with $t=2$ is equivalent to
a single equality problem.

Finally a remark on using the joint random source model of randomized
protocols throughout the paper. By a result of Newman \cite{Newman91} our
protocols of \autoref{sec:upperbound} can be made to work in private coin model 
(or even if one of the players is forced
to behave deterministically) by increasing the first message length by
$O(\log\log(N)+\log(1/\epsilon))$ bits, where 
$N= {m \choose k}$ is the number of possible
inputs. In our case this means adding the term 
$O(\log\log m)+o(k)$ to our bound of $O(k\log^{(r)}k)$, since our protocols
make at least $\exp(-k/\log k)$ error.
This additional cost is
insignificant for reasonably small values of $m$, but it is necessary for large
values as the equality problem, which is an instance of disjointness, requires
 $\Omega(\log \log m)$-bits in the private coin model.

Note also that we achieve a super-linear increase in the communication for 
OR of $n$ instances of equality even in the private coin model for $r=1$. 
For $r\geq 2$, no such increase happens in the private coin model as communication
complexity of $\EE^t_n$ is at most $O(n\log\log t)$ however a single 
equality problem requires $\Omega(\log \log t)$ bits.

\section*{Acknowledgments}
We would like to thank Hossein Jowhari for 
many stimulating discussions during the early stages of this work.

\bibliography{main}{}

\begin{thebibliography}{10}

\bibitem{AzizogluO03}
M.~Azizoğlu and Ö. Öğecioğlu.
\newblock Extremal sets minimizing dimension-normalized boundary in {H}amming
  graphs.
\newblock {\em SIAM Journal on Discrete Mathematics}, 17(2):219--236, 2003.

\bibitem{BabaiFS86}
L{\'a}szl{\'o} Babai, Peter Frankl, and Janos Simon.
\newblock Complexity classes in communication complexity theory (preliminary
  version).
\newblock In {\em FOCS}, pages 337--347. IEEE Computer Society, 1986.

\bibitem{Bar-YossefJKS04}
Ziv Bar-Yossef, T.~S. Jayram, Ravi Kumar, and D.~Sivakumar.
\newblock An information statistics approach to data stream and communication
  complexity.
\newblock {\em J. Comput. Syst. Sci.}, 68(4):702--732, 2004.

\bibitem{BarakBCR10}
Boaz Barak, Mark Braverman, Xi~Chen, and Anup Rao.
\newblock How to compress interactive communication.
\newblock In Schulman \cite{DBLP:conf/stoc/2010}, pages 67--76.

\bibitem{BeameF01}
Paul Beame and Faith~E. Fich.
\newblock Optimal bounds for the predecessor problem and related problems.
\newblock {\em Journal of Computer and System Sciences}, 65:2002, 2001.

\bibitem{Ben-AroyaRW08}
Avraham Ben-Aroya, Oded Regev, and Ronald de~Wolf.
\newblock A hypercontractive inequality for matrix-valued functions with
  applications to quantum computing and ldcs.
\newblock In {\em FOCS}, pages 477--486. IEEE Computer Society, 2008.

\bibitem{Bezrukov94}
Sergei Bezrukov.
\newblock Isoperimetric problems in discrete spaces.
\newblock In {\em Bolyai Soc. Math. Stud}, pages 59--91, 1994.

\bibitem{BollobasL91}
Béla Bollobás and Imre Leader.
\newblock Edge-isoperimetric inequalities in the grid.
\newblock {\em Combinatorica}, 11(4):299--314, 1991.

\bibitem{BrodyC09}
Joshua Brody and Amit Chakrabarti.
\newblock A multi-round communication lower bound for gap {H}amming and some
  consequences.
\newblock In {\em IEEE Conference on Computational Complexity}, pages 358--368.
  IEEE Computer Society, 2009.

\bibitem{BrodyCK12}
Joshua Brody, Amit Chakrabarti, and Ranganath Kondapally.
\newblock Certifying equality with limited interaction.
\newblock {\em Electronic Colloquium on Computational Complexity (ECCC)},
  19:153, 2012.

\bibitem{BrodyCRVW10}
Joshua Brody, Amit Chakrabarti, Oded Regev, Thomas Vidick, and Ronald de~Wolf.
\newblock Better {G}ap-{H}amming lower bounds via better round elimination.
\newblock In Maria~J. Serna, Ronen Shaltiel, Klaus Jansen, and Jos{\'e} D.~P.
  Rolim, editors, {\em APPROX-RANDOM}, volume 6302 of {\em Lecture Notes in
  Computer Science}, pages 476--489. Springer, 2010.

\bibitem{BuhrmanGMW12}
Harry Buhrman, David Garcia-Soriano, Arie Matsliah, and Ronald de~Wolf.
\newblock The non-adaptive query complexity of testing k-parities, 2012.

\bibitem{ChakrabartiSWY01}
Amit Chakrabarti, Yaoyun Shi, Anthony Wirth, and Andrew Chi-Chih Yao.
\newblock Informational complexity and the direct sum problem for simultaneous
  message complexity.
\newblock In {\em FOCS}, pages 270--278. IEEE Computer Society, 2001.

\bibitem{ChungGFS86}
F.~R. Chung, R.~L. Graham, P.~Frankl, and J.~B. Shearer.
\newblock Some intersection theorems for ordered sets and graphs.
\newblock {\em J. Comb. Theory Ser. A}, 43(1):23--37, September 1986.

\bibitem{Clements71}
G.~F. Clements.
\newblock Sets of lattice points which contain a maximal number of edges.
\newblock {\em Proc. Amer. Math. Soc.}, 27:13--15, 1971.

\bibitem{DasguptaKS12}
Anirban Dasgupta, Ravi Kumar, and D.~Sivakumar.
\newblock Sparse and lopsided set disjointness via information theory.
\newblock In {\em APPROX-RANDOM}, pages 517--528, 2012.

\bibitem{DurisGS87}
Pavol Duris, Zvi Galil, and Georg Schnitger.
\newblock Lower bounds on communication complexity.
\newblock {\em Inf. Comput.}, 73(1):1--22, 1987.

\bibitem{Gavinsky08}
Dmitry Gavinsky.
\newblock On the role of shared entanglement.
\newblock {\em Quantum Information {\&} Computation}, 8(1):82--95, 2008.

\bibitem{HalstenbergR88}
Bernd Halstenberg and R{\"u}diger Reischuk.
\newblock On different modes of communication (extended abstract).
\newblock In Janos Simon, editor, {\em STOC}, pages 162--172. ACM, 1988.

\bibitem{Harper64}
L.H. Harper.
\newblock Optimal assignment of numbers to vertices.
\newblock {\em J. Soc. Ind. Appl. Math.}, 12:131--135, 1964.

\bibitem{HarshaJMR10}
Prahladh Harsha, Rahul Jain, David~A. McAllester, and Jaikumar Radhakrishnan.
\newblock The communication complexity of correlation.
\newblock {\em IEEE Transactions on Information Theory}, 56(1):438--449, 2010.

\bibitem{Hart76}
Sergiu Hart.
\newblock A note on the edges of the n-cube.
\newblock {\em Discrete Mathematics}, 14(2):157 -- 163, 1976.

\bibitem{HastadW07}
Johan H{\aa}stad and Avi Wigderson.
\newblock The randomized communication complexity of set disjointness.
\newblock {\em Theory of Computing}, 3(1):211--219, 2007.

\bibitem{JainKN08}
Rahul Jain, Hartmut Klauck, and Ashwin Nayak.
\newblock Direct product theorems for classical communication complexity via
  subdistribution bounds: extended abstract.
\newblock In Cynthia Dwork, editor, {\em STOC}, pages 599--608. ACM, 2008.

\bibitem{JainRS03}
Rahul Jain, Jaikumar Radhakrishnan, and Pranab Sen.
\newblock A direct sum theorem in communication complexity via message
  compression.
\newblock In Jos C.~M. Baeten, Jan~Karel Lenstra, Joachim Parrow, and
  Gerhard~J. Woeginger, editors, {\em ICALP}, volume 2719 of {\em Lecture Notes
  in Computer Science}, pages 300--315. Springer, 2003.

\bibitem{JainRS05}
Rahul Jain, Jaikumar Radhakrishnan, and Pranab Sen.
\newblock Prior entanglement, message compression and privacy in quantum
  communication.
\newblock In {\em IEEE Conference on Computational Complexity}, pages 285--296.
  IEEE Computer Society, 2005.

\bibitem{JayramW09}
T.~S. Jayram and David~P. Woodruff.
\newblock The data stream space complexity of cascaded norms.
\newblock In {\em FOCS}, pages 765--774. IEEE Computer Society, 2009.

\bibitem{KalyanasundaramS92}
Bala Kalyanasundaram and Georg Schnitger.
\newblock The probabilistic communication complexity of set intersection.
\newblock {\em SIAM J. Discrete Math.}, 5(4):545--557, 1992.

\bibitem{KarchmerW90}
Mauricio Karchmer and Avi Wigderson.
\newblock Monotone circuits for connectivity require super-logarithmic depth.
\newblock {\em SIAM J. Discrete Math.}, 3(2):255--265, 1990.

\bibitem{KleitmanKR71}
D.~L. Kleitman, M.~M. Krieger, and B.~L. Rotschild.
\newblock Configurations maximizing the number of pairs of {H}amming-adjacent
  lattice points.
\newblock {\em Studies in Appl. Math.}, 50:115--119, 1971.

\bibitem{KushilevitzN97}
Eyal Kushilevitz and Noam Nisan.
\newblock {\em Communication Complexity}.
\newblock Cambridge University Press, 1997.

\bibitem{Lindsey64}
John~H. Lindsey.
\newblock Assignment of numbers to vertices.
\newblock {\em The American Mathematical Monthly}, 71(5):508--516, 1964.

\bibitem{MagniezMN10}
Fr{\'e}d{\'e}ric Magniez, Claire Mathieu, and Ashwin Nayak.
\newblock Recognizing well-parenthesized expressions in the streaming model.
\newblock In Schulman \cite{DBLP:conf/stoc/2010}, pages 261--270.

\bibitem{Miltersen94}
Peter~Bro Miltersen.
\newblock Lower bounds for union-split-find related problems on random access
  machines, 1994.

\bibitem{MiltersenNSW98}
Peter~Bro Miltersen, Noam Nisan, Shmuel Safra, and Avi Wigderson.
\newblock On data structures and asymmetric communication complexity.
\newblock {\em J. Comput. Syst. Sci.}, 57(1):37--49, 1998.

\bibitem{MolinaroWY13}
Marco Molinaro, David Woodruff, and Grigory Yaroslavtsev.
\newblock Beating the direct sum theorem in communication complexity with
  implications for sketching.
\newblock In {\em Proceedings of 24th ACM-SIAM Symposium on Discrete
  Algorithms}, pages 1486--1502, 2013.

\bibitem{Newman91}
Ilan Newman.
\newblock Private vs. common random bits in communication complexity.
\newblock {\em Information Processing Letters}, 39(2):67 -- 71, 1991.

\bibitem{NisanW93}
Noam Nisan and Avi Wigderson.
\newblock Rounds in communication complexity revisited.
\newblock {\em SIAM J. Comput.}, 22(1):211--219, 1993.

\bibitem{ParnafesIRWA97}
Itzhak Parnafes, Ran Raz, and Avi Wigderson.
\newblock Direct product results and the gcd problem, in old and new
  communication models.
\newblock In {\em Proceedings of the twenty-ninth annual ACM symposium on
  Theory of computing}, STOC '97, pages 363--372, New York, NY, USA, 1997. ACM.

\bibitem{Patrascu09}
Mihai Pătraşcu.
\newblock Cc4: One-way communication and a puzzle.
\newblock
  \url{http://infoweekly.blogspot.com/2009/04/cc4-one-way-communication-and-pu%
zzle.html}.
\newblock Accessed: 31/03/2013.

\bibitem{Patrascu11}
Mihai Pătraşcu.
\newblock Unifying the landscape of cell-probe lower bounds.
\newblock {\em SIAM J. Comput.}, 40(3):827--847, 2011.

\bibitem{Razborov92}
Alexander~A. Razborov.
\newblock On the distributional complexity of disjointness.
\newblock {\em Theor. Comput. Sci.}, 106(2):385--390, 1992.

\bibitem{DBLP:conf/stoc/2010}
Leonard~J. Schulman, editor.
\newblock {\em Proceedings of the 42nd ACM Symposium on Theory of Computing,
  STOC 2010, Cambridge, Massachusetts, USA, 5-8 June 2010}. ACM, 2010.

\bibitem{Sen03}
Pranab Sen.
\newblock Lower bounds for predecessor searching in the cell probe model.
\newblock In {\em IEEE Conference on Computational Complexity}, pages 73--83.
  IEEE Computer Society, 2003.

\bibitem{Woodruff08}
David~P. Woodruff.
\newblock personal communication, 2008.

\end{thebibliography}
\bibliographystyle{plain}
\end{document}